\documentclass[preprint,12pt,authoryear]{elsarticle}

\usepackage{amsmath,amssymb,amsthm,mathtools}
\usepackage{microtype}
\usepackage{graphicx}
\usepackage{placeins}
\usepackage[mathlines]{lineno}
\usepackage{hyperref}

\journal{Mathematical Social Sciences}
\biboptions{round,authoryear}

\newcommand{\LCC}{L}

\newtheorem{theorem}{Theorem}
\newtheorem{proposition}{Proposition}
\newtheorem{corollary}{Corollary}
\newtheorem{lemma}{Lemma}

\begin{document}

\begin{frontmatter}

\title{Who Pays for a Connected Public Good?}

\author[aff1]{Marco Tulio Angulo}
\address[aff1]{Institute of Mathematics, Universidad Nacional Aut\'onoma de M\'exico, M\'exico}

\begin{abstract}
Connectivity can turn separate contributions into a public good that benefits
participants and nonparticipants alike. Yet only participants pay for it, and each can
avoid her charge by withdrawing while enjoying whatever public output survives. How
do network connections constrain who can or must pay, and how large a cost can
the group bear? We study a participation game in which public output equals the size of the largest connected group of active players, whose members divide a fixed operating cost. The output a participant destroys by withdrawing is her withdrawal responsibility and bounds her payment. These bounds determine the group's cost-bearing capacity and, whenever the group can be kept active, which participants must pay.
We prove that among connected networks with at least five players, the path uniquely maximizes the cost-bearing capacity of a group, yet minimizes the expected surviving output after a uniformly selected participant withdraws. As the group grows, the ratio of maximal capacity to gross social benefit under full participation converges to one quarter. Fragility can therefore strengthen voluntary finance, but even the most financeable network cannot close the gap between the value it creates and what its participants will pay to sustain it.
\end{abstract}

\begin{keyword}
public goods \sep network games \sep cost sharing \sep
extremal graph theory
\end{keyword}

\end{frontmatter}

%\linenumbers

\section{Introduction}
\label{sec:introduction}

After an earthquake disables ordinary communications, local organizations restore service by operating emergency radio relays. The service benefits operators and nonoperators alike, but only active operators share its operating cost. An operator can avoid her charge by switching off her relay while continuing to use whatever communications system survives. The largest connected component is the largest set of active relays that can operate as one system, so we measure public output by its number of relays. The output destroyed by a withdrawal therefore depends on the operator’s network position. Figure~\ref{fig:fig1a} compares two configurations of eight active relays, a cycle and a path. Both initially produce eight units of output. Removing any relay from the cycle leaves the other seven connected, so output falls by one. Removing a central relay from the path leaves no connected component larger than four, so output falls by four. An operator will remain active only if her charge does not exceed the benefit she would lose by switching off her relay. The network therefore constrains which cost divisions can keep the group active and whether the entire bill can be covered. Who, then, can or must pay? What is the largest operating cost the group can finance?

The same financing problem arises when separate contributions create value
only through a connected system. Community broadband provides a close analogue. In guifi.net, participants pool resources to build and operate common network infrastructure and use an explicit mechanism to divide shared transit costs \citep{BaigEtAl2015,CerdaAlabernEtAl2020}. Private-land biodiversity conservation shares the productive importance of connectivity, though not the same financing institution. Conserving a parcel creates biodiversity benefits beyond its owner and may connect otherwise separate habitats \citep{ParkhurstEtAl2002,PolaskyEtAl2014}. Across these settings, contributors bear costs, benefits extend beyond them, and the value of each contribution depends partly on what it connects. Withdrawal may therefore remove only one contribution or fragment a larger system, while the withdrawing contributor continues to enjoy whatever public benefit survives.

We isolate this mechanism in a participation game on a fixed network. Every
player, whether active or inactive, benefits from the size of the largest connected group of active players, while all active players divide the same real operating cost. Before players choose whether to participate, the institution announces how it will divide the operating cost for every possible group of active players. This commitment matters for entry because an outsider’s incentive to join depends on the charge specified for the enlarged group. The model separates the network's role from
the institution's role.

\begin{figure}[t]
    \centering
    \includegraphics[width=0.6\textwidth]{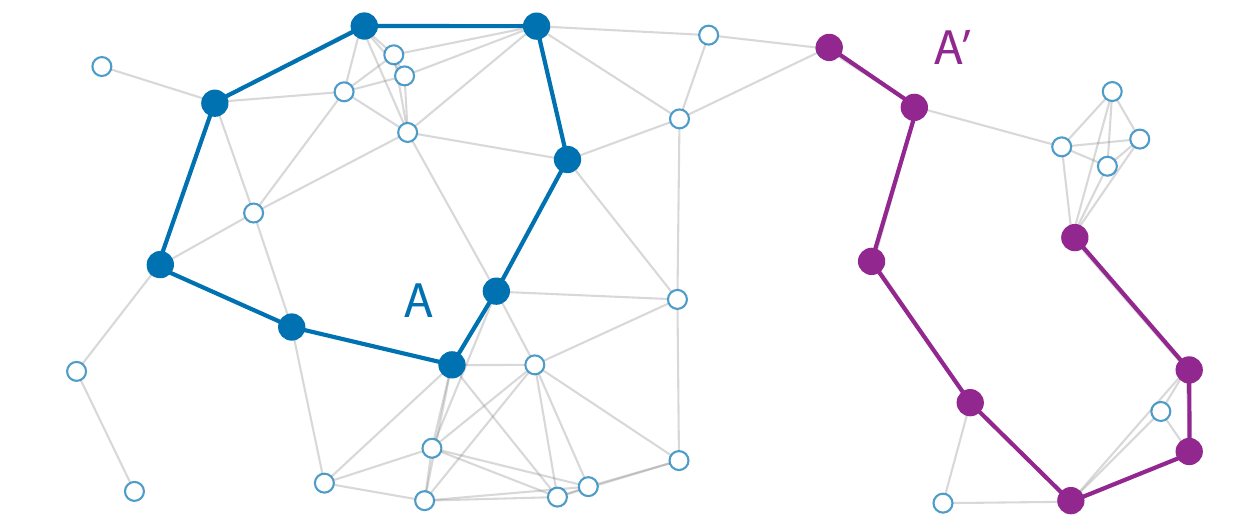}
    \caption{\textbf{Equal output, different cost-bearing capacity.}
The fixed host network contains \(40\) positions. The blue cycle and purple
path are alternative groups of eight participants, considered separately.
Both produce eight units of connected output. In the cycle, each withdrawal
destroys one unit, so the group can bear a cost equal to eight times the unit
benefit. Along the path, the withdrawal losses are
\(1,2,3,4,4,3,2,1\), so the group can bear a cost equal to twenty times the
unit benefit. At either maximal cost, every participant is indifferent between
remaining and withdrawing.}
    \label{fig:fig1a}
\end{figure}

\FloatBarrier

\subsection{Summary of contributions}
\label{subsec:contributions}

Our first result shows how network position limits voluntary payments. A
participant remains active only when her charge does not exceed the value of
the connected output her departure would destroy. These individual limits
determine the group's \emph{cost-bearing capacity}---the largest bill the group can
cover---and each participant's possible payments. She can pay nothing when the
others can cover the bill. Otherwise, their shortfall is the least she must
pay. The network therefore constrains the cost division, while the institution
usually chooses among the divisions that respect those limits. A group that
excludes players must also deter entry, so this conclusion depends on commitment
to the charges announced after an inactive player joins. At maximum capacity the
limits determine a unique division. Below this limit the institution generally retains
a choice. Lemma~\ref{lem:participation-conditions},
Theorem~\ref{thm:payment-characterization}, and
Corollary~\ref{cor:individual-payments} establish these claims.

Figure~\ref{fig:fig1a} illustrates how sharply these constraints can depend on network connections. When one output unit is
worth one, the cycle can bear a bill of eight and the path a bill of twenty,
although both produce eight units of output. At a bill of eight, every cycle
participant must pay one. With the same bill on the path, any chosen participant can be exempted of paying 
under some committed rule. The path's larger capacity comes with greater loss
after withdrawal: if each participant is equally likely to leave, expected
connected output is seven on the cycle but only five and a half on the path.
More generally, among connected networks with at least five participants, only
the path maximizes the cost-bearing capacity, and it also minimizes expected surviving output
after one equally likely withdrawal (Theorem~\ref{thm:graph-bound}).

Network structure also determines when payment becomes unavoidable. For groups
with at least three participants, whenever a committed rule can keep the group
active at a bill no greater than half of capacity, each participant can be
exempted under some rule. As the bill rises, participants whose departure would
destroy the most output are the first whom every such rule must charge. A
hub-and-spoke network forces its hub to pay just above half of capacity, whereas
a growing path postpones its first forced payer until the bill is close to
capacity.  Corollary~\ref{cor:unavoidable-threshold} gives the sharp thresholds.

The welfare result explains why these limits matter. Full participation may
uniquely maximize social welfare---the total benefit received by all players
minus the real operating cost---even though no division of the bill can keep
everyone active. The gap arises because a departing participant counts only
her own loss, although her departure reduces every player's benefit. Even the
path, which maximizes voluntary cost-bearing capacity, can finance asymptotically only one
quarter of full participation's gross social benefit---the total benefit before
the operating cost is subtracted (Proposition~\ref{prop:welfare-gap}). Fragility to random withdrawal can therefore increase the bill
participants will bear. However, even on the path, the largest such bill
remains far below the gross benefit created by full participation.

\subsection{Related work}
\label{subsec:related}

Voluntary public-good finance asks when private incentives can cover a collective cost. Classical public-good theory characterizes efficient provision \citep{Samuelson1954}, while private-provision models show how free riding can leave voluntary contributions below the efficient level \citep{BergstromBlumeVarian1986}. Threshold contribution games ask whether voluntary payments can reach the cost of a fixed project \citep{PalfreyRosenthal1984,BagnoliLipman1989}. Participation models explain why a public good may remain unprovided even when its aggregate benefits exceed its cost \citep{SaijoYamato1999,DixitOlson2000,SaijoYamato2010}, while models of self-enforcing agreements ask whether membership is stable against withdrawal and entry \citep{CarraroSiniscalco1993,Barrett1994}. Reciprocity networks show how social motives and cost sharing can jointly sustain participation \citep{DufwenbergPatel2017}. There, links represent reciprocal relationships. In our model, they represent productive connections. Our departure is that network position generates each participant’s payment limit. A participant who withdraws avoids her charge but continues to benefit from the output that survives, so she cannot be charged more than the value of the output her withdrawal destroys.

Public-good games on networks give links a different role. Links commonly
determine whose effort benefits whom, producing network-specific patterns of
free riding, specialization, and welfare
\citep{BramoulleKranton2007,GaleottiEtAl2010}. \citet{Allouch2015} studies private provision on a fixed network in which consumers benefit only from their direct neighbors’ contributions and derives equilibrium and neutrality results. \citet{ElliottGolub2019} represent marginal externalities through a weighted, directed network and characterize Lindahl contributions in terms of eigenvector centrality. Links play a different role in our model. Benefits are global and nonexcludable: every player receives the same public output. Links instead determine which active players produce connected output together and how much output each participant would destroy by withdrawing. The network is fixed, and players choose whether to participate rather than which links to form, unlike models of endogenous network formation \citep{JacksonWolinsky1996,BalaGoyal2000}.

Cost-allocation research asks how to divide a collective bill, but neighboring models differ in what they take as given. Public-project and shared-service mechanisms use agents’ reported values or demands to determine provision and payments \citep{JacksonMoulin1992,MoulinShenker1992,Moulin1994,DebRazzolini1999,MoulinShenker2001}. In network cost allocation, the infrastructure itself is the object being financed. Minimum-spanning-tree games allocate the cost of connecting users to a source \citep{Bird1976,GranotHuberman1981}, whereas strategic network-design games allow self-interested users to choose connections and share their edge costs \citep{AnshelevichEtAl2008}. The standard fixed-tree model of \citet{TijsEtAl2002} is closer to ours: agents are connected to a source through a given tree, and a cooperative cost game allocates the tree’s maintenance costs. We also take the infrastructure as given, but divide a single fixed operating bill and make participation endogenous. Players may enter or withdraw, while inactive players continue to receive whatever public benefit survives. We ask which divisions of the entire bill can sustain a proposed group against withdrawal by its members and entry by outsiders. Dividing the bill then resembles a bankruptcy or taxation problem, in which a fixed amount must be allocated under individual claims or bounds \citep{ONeill1982,AumannMaschler1985,Thomson2003,Thomson2015}. Those models take the claims or bounds as given. Here the network generates them: each participant’s position determines how much she can be charged without withdrawing.

Withdrawal responsibility connects the analysis to cooperative network games
and graph vulnerability. Cooperative network games evaluate a player through
coalition values or marginal contributions across many coalitions
\citep{Myerson1977,BachrachEtAl2013,VanDerZandenEtAl2023}. We evaluate a
participant through her unilateral departure from the realized group.
Key-player and vulnerability research likewise asks how node removal changes equilibrium activity, connectivity, or the size of the largest surviving component 
\citep{BallesterEtAl2006,Borgatti2006,BarefootEtAl1987,BaggaEtAl1992,
ChenHero2013,MattaEtAl2017}. On a tree, withdrawal responsibility is the
vertex's complementary weight, a statistic  developed through complementary-weight sequences by \citet{Shang2022}.
We therefore do not present the vertex statistic itself as new.
Theorem~\ref{thm:graph-bound} instead aggregates withdrawal responsibility across vertices, bounds the resulting sum among connected graphs of a fixed order, and characterizes the equality cases. The decisive economic difference is agency. Failure, attack, and dismantling models treat deletion as an external event or intervention 
\citep{AlbertEtAl2000,CallawayEtAl2000,SchneiderEtAl2011,
BraunsteinEtAl2016}. Here withdrawal is a participant’s voluntary response to a charge. Under a committed cost-division rule, each deletion loss serves as an individual payment limit. Together with budget balance, these limits characterize the sustainable current divisions, the group’s cost-bearing capacity, its unavoidable payers, and the voluntary-financing gap.

\section{The model}
\label{sec:model}

Let \(G=(V,E)\) be a finite, simple, undirected, connected graph with
\(N=|V|\geq3\) vertices. Each vertex \(i\in V\) is a network position occupied
by one player. Players simultaneously choose whether to be active. Every set
\(A\subseteq V\) is therefore a possible \emph{active set}. Its members are
\emph{participants} and players in \(V\setminus A\) are \emph{outsiders}. An
active set is \emph{proper} when \(A\ne V\). Our main results concern connected
active sets, although the model permits disconnected sets. A withdrawal may
fragment a connected set, and a nonadjacent outsider may enter as a separate
component.

For each active set \(A\), let \(\LCC_G(A)\) be the size of the largest
connected component of the subgraph induced by \(A\), with
\(\LCC_G(\varnothing)=0\). Every player, whether active or inactive, receives
the public benefit
\[
 b\LCC_G(A),
\]
where the \emph{unit benefit} \(b>0\) is the value of one unit of connected
output.

Every nonempty active set incurs the same fixed \emph{operating cost} or bill
\(C\geq0\). This is a real resource expenditure, not a transfer among the
modeled players, and the participants must finance it in full. A
\emph{cost-division rule} \(c\) assigns a payment \(c_i(A)\geq0\) to each
participant \(i\in A\) in every nonempty active set \(A\), with the budget balance
\[
 \sum_{i\in A}c_i(A)=C.
\]
We assume that the institution commits publicly to the cost-division rule before players choose
their activity. Payments may depend on player identity and on the realized
active set. Outsiders cannot be charged. After players choose, the payments
assigned to the resulting active set apply. A unilateral entry or withdrawal
therefore invokes the payments already specified for the new active set.
Player \(i\)'s payoff is
\[
 u_i(A;c)=
 \begin{cases}
  b\LCC_G(A)-c_i(A),&i\in A,\\
  b\LCC_G(A),&i\notin A.
 \end{cases}.
\]

An active set \(A\) is a (pure) \emph{Nash equilibrium} under \(c\) if no
player gains by changing her activity. It is strict if every player strictly
loses by doing so.  

%Our analysis below asks whether there exists such a cost-division rule under which a proposed active set is an equilibrium. It does not address how the institution selects among
%possible rules or which equilibrium is reached when a given rule supports
%several.

\section{Results}
\label{sec:nash-supportability}

\subsection{Participation incentives and Nash supportability}
\label{subsec:capacity}

The first question is when a proposed active set can be sustained against entry
and withdrawal. For a nonempty active set \(A\), participant \(i\)'s
\emph{withdrawal responsibility} is
\[
 r_i(A)=\LCC_G(A)-\LCC_G(A\setminus\{i\}),
 \qquad i\in A.
\]
It is the connected output lost when \(i\) withdraws. Withdrawal responsibility
is a counterfactual property of a network position. In general, \(r_i(A)\geq0\). If \(A\) is connected,
then \(r_i(A)\geq1\). For an outsider
\(j\notin A\), define the output added by entry as
\[
 q_j(A)=\LCC_G(A\cup\{j\})-\LCC_G(A).
\]
If \(A\) is nonempty and connected, then \(q_j(A)=1\) when \(j\) has a
neighbor in \(A\), and \(q_j(A)=0\) otherwise. For a disconnected set, entry
may join several components and add more than one unit. In every case,
\[
 r_j(A\cup\{j\})=q_j(A).
\]

These two quantities determine the payoff from every unilateral entry or
withdrawal.

\begin{lemma}[Unilateral participation conditions]
\label{lem:participation-conditions}
Let \(c\) be a cost-division rule and let \(A\ne\varnothing\). The active set
\(A\) is a Nash equilibrium under \(c\) if and only if
\begin{align}
 c_i(A)&\leq br_i(A)
 &&\text{for every }i\in A, \label{eq:withdrawal-ceilings}\\
 c_j(A\cup\{j\})&\geq bq_j(A)
 &&\text{for every }j\notin A. \label{eq:entry-charges}
\end{align}
The equilibrium is strict if and only if every applicable inequality is
strict.
\end{lemma}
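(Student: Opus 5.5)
The proof unpacks the definition of a pure Nash equilibrium. With activity choices binary, the only unilateral deviations from \(A\) are withdrawal by a participant \(i\in A\), which produces \(A\setminus\{i\}\), and entry by an outsider \(j\notin A\), which produces \(A\cup\{j\}\). So \(A\) is an equilibrium under \(c\) exactly when none of these deviations strictly raises the deviator's payoff. It is strict exactly when each of them strictly lowers it. I would handle the two types of deviation separately, compute the payoff difference for each, and show that it is nonnegative (respectively positive) precisely when the corresponding inequality in \eqref{eq:withdrawal-ceilings} or \eqref{eq:entry-charges} holds (respectively holds strictly).

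For a participant \(i\in A\), remaining yields \(u_i(A;c)=b\LCC_G(A)-c_i(A)\). After withdrawing, \(i\) is an outsider of \(A\setminus\{i\}\) and is charged nothing, so she receives \(b\LCC_G(A\setminus\{i\})\); this holds also when \(A\setminus\{i\}=\varnothing\), since \(\LCC_G(\varnothing)=0\). The gain from staying is therefore \(b r_i(A)-c_i(A)\), and it is \(\geq0\) (\(>0\)) if and only if \(c_i(A)\leq b r_i(A)\) (\(<\)).

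For an outsider \(j\notin A\), staying out yields \(b\LCC_G(A)\). Entering yields \(b\LCC_G(A\cup\{j\})-c_j(A\cup\{j\})\). Here the charge is the one the committed rule specifies for the set \(A\cup\{j\}\), which is nonempty, so the charge is defined. The gain from staying out is \(c_j(A\cup\{j\})-b q_j(A)\), and it is \(\geq0\) (\(>0\)) exactly when \eqref{eq:entry-charges} holds (strictly).

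Combining the two cases over all players gives both equivalences. The only point requiring care, and the nearest thing to an obstacle, is that the phrase ``every applicable inequality'' must be read as ranging over exactly the deviations that exist. When \(A=V\) there are no outsiders, so \eqref{eq:entry-charges} is vacuous, and the argument should note this explicitly. Apart from that the argument involves no calculation beyond these two payoff differences.
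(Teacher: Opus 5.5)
Your proposal is correct and matches the paper's own proof: both reduce equilibrium to the only two unilateral deviations available under binary actions. Both then compute the withdrawal gain \(c_i(A)-br_i(A)\) and the entry gain \(bq_j(A)-c_j(A\cup\{j\})\), and your remarks on \(A\setminus\{i\}=\varnothing\) and on the vacuous entry condition when \(A=V\) correctly fill in details the paper leaves implicit.
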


\begin{proof}
Participant \(i\)'s gain from withdrawing is \(c_i(A)-br_i(A)\), whereas
outsider \(j\)'s gain from entering is
\(bq_j(A)-c_j(A\cup\{j\})\). These are all unilateral deviations because
actions are binary.
\end{proof}

The lemma gives the two output changes a monetary meaning. The value
\(br_i(A)\) is participant \(i\)'s withdrawal ceiling: a larger charge makes
withdrawal profitable. The value \(bq_j(A)\) is outsider \(j\)'s entry gain
before payment, so her post-entry charge must be at least that large. These
conditions test a given rule. The next result asks whether some committed rule
can satisfy them.

Call \(A\) \emph{Nash supportable} if it is a Nash equilibrium under some
committed cost-division rule, and \emph{strictly Nash supportable} if it is a
strict Nash equilibrium under some such rule. A rule with either property is a
\emph{supporting rule} for \(A\). Supportability is an existence property
of the target active set. It does not predict  which rule or equilibrium will
emerge. To combine the participant ceilings, define the \emph{aggregate
withdrawal responsibility} of \(A\) by
\[
 R(A)=\sum_{i\in A}r_i(A).
\]
This quantity sums separate one-player responsibilities, not the output
predicted to disappear if every participant withdrew together.

\begin{theorem}[Nash supportability]
\label{thm:payment-characterization}
Let \(A\ne\varnothing\) be connected. If \(A=V\), then \(A\) is Nash
supportable if and only if
\[
 0\leq C\leq bR(A).
\]
If \(A\subsetneq V\) is proper, then \(A\) is Nash supportable if and only if
\[
 b\leq C\leq bR(A).
\]
\end{theorem}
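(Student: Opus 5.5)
Both directions rest on Lemma~\ref{lem:participation-conditions}. A supporting rule must satisfy the withdrawal ceilings \eqref{eq:withdrawal-ceilings} at \(A\) and the entry charges \eqref{eq:entry-charges} at each \(A\cup\{j\}\). Budget balance ties the former to \(C\). Because the rule is committed on every active set, the latter constrain only the values \(c(A\cup\{j\})\), which are otherwise free.

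For necessity, I would sum \eqref{eq:withdrawal-ceilings} over \(i\in A\) and use budget balance to get \(C=\sum_{i\in A}c_i(A)\le bR(A)\). The bound \(C\ge0\) holds by assumption. When \(A\) is proper, the lower bound \(C\ge b\) comes from connectivity of \(G\). Since \(A\) is nonempty and \(A\ne V\), some outsider \(j\notin A\) has a neighbor in \(A\). Because \(A\) is connected, the remark preceding Lemma~\ref{lem:participation-conditions} gives \(q_j(A)=1\). Then \eqref{eq:entry-charges} together with nonnegativity of the other charges yields \(C\ge c_j(A\cup\{j\})\ge b\).

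For sufficiency, I would construct the rule explicitly. On \(A\), connectivity gives \(r_i(A)\ge1\), so \(R(A)>0\). I would set the proportional charges \(c_i(A)=C\,r_i(A)/R(A)\). These are nonnegative, sum to \(C\), and satisfy \(c_i(A)\le b r_i(A)\) exactly when \(C\le bR(A)\). For each outsider \(j\), I would charge \(j\) the whole bill on \(A\cup\{j\}\): \(c_j(A\cup\{j\})=C\) and \(c_k(A\cup\{j\})=0\) for \(k\ne j\). Then \eqref{eq:entry-charges} requires \(C\ge b\,q_j(A)\). Here \(q_j(A)\in\{0,1\}\) because \(A\) is connected and nonempty, so \(C\ge b\) suffices. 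On every remaining active set the charges can be any budget-balanced allocation, for example the whole bill to the lowest-indexed participant. When \(A=V\) there are no outsiders, so only the ceilings matter and \(0\le C\le bR(A)\) suffices.

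The step needing the most care is the necessity of \(C\ge b\) for proper \(A\). It requires producing an outsider adjacent to \(A\), which uses connectivity of \(G\) and not merely of \(A\), and it requires confirming \(q_j(A)=1\) rather than some larger value. Both facts follow from \(A\) being connected. I would also check that the sets \(A\cup\{j\}\) for distinct \(j\) are distinct and differ from \(A\), so the separate specifications are consistent. This holds because \(j\notin A\).
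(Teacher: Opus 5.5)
Your proposal is correct and follows the paper's proof: necessity comes from summing the withdrawal ceilings and from an outsider adjacent to \(A\) with \(q_j(A)=1\), and sufficiency from the proportional charges \(c_i(A)=C\,r_i(A)/R(A)\) at \(A\). The only difference is cosmetic: you charge each entrant the whole bill \(C\) at \(A\cup\{j\}\), whereas the paper charges her exactly \(bq_j(A)\) and splits the remainder equally among the incumbents; both choices satisfy the entry condition once \(C\ge b\).
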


\begin{proof}
Suppose that a rule supports \(A\). Summing the participant inequalities in
Lemma~\ref{lem:participation-conditions} and using budget balance gives
\(C\leq bR(A)\). If \(A\) is proper, connectedness of \(G\) supplies an
outsider \(j\) adjacent to \(A\). Since \(q_j(A)=1\), the entry inequality and
\(c_j(A\cup\{j\})\leq C\) give \(C\geq b\).

Conversely, suppose that the applicable bounds on \(C\) hold. Because \(A\) is
connected, \(R(A)>0\). The proportional charges
\[
 c_i(A)=C\frac{r_i(A)}{R(A)}
\]
are nonnegative, sum to \(C\), and satisfy every withdrawal ceiling. If \(A\)
is proper, then for each outsider \(j\), prescribe
\[
 c_j(A\cup\{j\})=bq_j(A),
 \qquad
 c_i(A\cup\{j\})=\frac{C-bq_j(A)}{|A|}
 \quad(i\in A).
\]
These charges are nonnegative and budget balanced because
\(q_j(A)\in\{0,1\}\) and \(C\geq b\). The post-entry prescriptions do not
conflict because the sets
\(A\cup\{j\}\) are distinct. At every other nonempty active set, use any
nonnegative, budget-balanced division. Lemma~\ref{lem:participation-conditions}
then makes \(A\) a Nash equilibrium.

\end{proof}

The proof also identifies every cost division at \(A\) that can be part of a
supporting rule. When the bounds in
Theorem~\ref{thm:payment-characterization} hold, choose any division of the bill
at \(A\) that charges each participant no more than \(br_i(A)\). Combining this
division with the post-entry charges constructed in the proof gives a rule that
supports \(A\). Conversely, Lemma~\ref{lem:participation-conditions} shows that
no supporting rule can charge a participant more. The possible divisions at
\(A\) are therefore exactly those that respect every withdrawal ceiling.

The strict boundaries follow from the same argument. Full participation is
strictly Nash supportable exactly when
\(0\leq C<bR(V)\). A proper connected set $A$ is strictly Nash supportable exactly
when \(b<C<bR(A)\). For a proper set with \(C>b\), assigning entrant
\(j\) the charge \((C+bq_j(A))/2\) makes entry strictly unprofitable and leaves
a nonnegative remainder for the incumbents. Conversely, strict inequalities in
Lemma~\ref{lem:participation-conditions} imply the stated bounds. At
\(C=bR(A)\), every withdrawal ceiling binds. At \(C=b\), every outsider adjacent
to a proper connected set is indifferent to entry.

The upper bound \(bR(A)\) is the largest operating cost that can be divided
without inducing a participant to withdraw. A proper group must also deter an
adjacent outsider, which explains the lower bound \(C\geq b\). Full
participation has no entry constraint. One available institutional choice is
proportional responsibility division:
\begin{equation}
  c_i(A) = C r_i(A) / R(A).
  \label{eq:proportional-responsibility}
\end{equation}
The proof of Theorem \ref{thm:payment-characterization} shows this cost division can be embedded in a supporting rule whenever \(A\) is Nash supportable.

Aggregate responsibility also has an exact one-withdrawal interpretation. Let
\(A\) be connected and select the participant \(I\) that withdraws uniformly at random from
\(A\). Then
\begin{equation}
 \frac{R(A)}{|A|^2}
 +
 \mathbb E\!\left[
   \frac{\LCC_G(A\setminus\{I\})}{|A|}
 \right]
 =1.
 \label{eq:responsibility-survival}
\end{equation}
This identity follows by summing the definition of responsibility.  The second term in \eqref{eq:responsibility-survival} is the expected share of connected output that survives after withdrawal. Thus, at a fixed group size $|A|$, greater cost-bearing capacity is equivalent
to less expected surviving output after one uniformly selected withdrawal.

\subsection{Who can pay, who can be exempted, and who must pay?}
\label{subsec:incidence}

We next use the ceilings in Lemma~\ref{lem:participation-conditions} and the
supportability bounds in Theorem~\ref{thm:payment-characterization} to determine
which participants can be exempted, which ones can bear the entire cost, and who must pay under every
supporting rule.

\begin{corollary}[Exact individual payment bounds]
\label{cor:individual-payments}
Let \(A\) be connected and Nash supportable at cost \(C\). Across all
cost-division rules supporting \(A\), participant \(i\)'s payment
ranges over exactly the interval
\begin{equation}
 \max\left\{0,C-b\bigl[R(A)-r_i(A)\bigr]\right\}
 \leq c_i(A)\leq
 \min\{C,br_i(A)\}.
 \label{eq:individual-payment-bounds}
\end{equation}
\end{corollary}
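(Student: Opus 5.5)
The argument has two halves. Necessity says every supporting rule charges within the bounds of \eqref{eq:individual-payment-bounds}. Sufficiency says every value in that interval is attained by some supporting rule. Both rest on the remark after Theorem~\ref{thm:payment-characterization}: when the supportability bounds hold, the divisions of \(C\) at \(A\) that appear in supporting rules are exactly the nonnegative, budget-balanced vectors \((c_k(A))_{k\in A}\) with \(c_k(A)\leq br_k(A)\) for all \(k\). The post-entry charges built in that proof handle the outsiders independently of how \(C\) is divided at \(A\). So the corollary reduces to a statement about the polytope
\[
P=\Bigl\{x\in\mathbb R^A:\ 0\leq x_k\leq br_k(A),\ \textstyle\sum_{k\in A}x_k=C\Bigr\},
\]
which is nonempty because \(C\leq bR(A)\). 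The claim becomes that the projection of \(P\) onto coordinate \(i\) is exactly the stated interval.

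For necessity, take any \(x\in P\).
\begin{itemize}
\item The upper bound holds because \(x_i\leq br_i(A)\) by the withdrawal ceiling, and \(x_i\leq C\) since the other coordinates are nonnegative and sum with \(x_i\) to \(C\).
\item The lower bound holds because \(x_i\geq0\), and \(x_i=C-\sum_{k\neq i}x_k\geq C-b[R(A)-r_i(A)]\) using the ceilings on the others.
\end{itemize}

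For sufficiency, the projection of the convex set \(P\) onto one coordinate is an interval. It therefore suffices to show that both endpoints are attained by points of \(P\).

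For the upper endpoint, set \(x_i=\min\{C,br_i(A)\}\). The remainder \(C-x_i\) is nonnegative and at most \(C-\min\{C,br_i(A)\}\leq b[R(A)-r_i(A)]\): if the minimum is \(C\) the remainder is zero, and otherwise the bound follows from \(C\leq bR(A)\). I then distribute the remainder among the others \emph{greedily}: fill each \(k\neq i\) up to its ceiling in some fixed order until the remainder is exhausted. Equivalently, use proportional shares \(x_k=(C-x_i)\,r_k(A)/(R(A)-r_i(A))\). This needs \(R(A)-r_i(A)>0\), which holds since \(|A|\geq2\) and \(A\) is connected, so each \(r_k\geq1\). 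The case \(|A|=1\) is trivial, because then the interval collapses to \(\{C\}\) given \(C\leq br_i\).

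For the lower endpoint, set \(x_i=\max\{0,C-b[R(A)-r_i(A)]\}\). The others must share \(C-x_i=\min\{C,b[R(A)-r_i(A)]\}\), which lies within their total capacity \(b[R(A)-r_i(A)]\), so the same greedy filling works. Here \(x_i\leq br_i(A)\) because \(C\leq bR(A)\).

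The main obstacle is only bookkeeping. First, the entry constraints in a proper \(A\) must not restrict the division at \(A\); this is settled by the embedding remark. Second, the degenerate cases \(|A|=1\) and \(C=0\) need to be checked for consistency. Once those are handled, convexity of \(P\) fills in the interior of the interval.
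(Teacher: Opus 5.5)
Your proof is correct and follows essentially the paper's argument. Necessity comes from nonnegativity, budget balance, and the withdrawal ceilings. Sufficiency fixes $c_i(A)$, spreads the remainder over the other participants in proportion to $r_k(A)$, and embeds the result via the remark after Theorem~\ref{thm:payment-characterization}. The only difference is that you construct the two endpoints and then invoke convexity of $P$, whereas the paper applies the proportional formula directly to every $y$ in the interval; since your formula already handles every such $y$, the convexity step is harmless but unnecessary.
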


\begin{proof}
Nonnegativity, budget balance, and the participant's own ceiling give the
upper bound. The other participants can pay at most
\(b[R(A)-r_i(A)]\), which gives the lower bound.

Conversely, fix a value \(y\) in the displayed interval and set
\(c_i(A)=y\). If \(A\setminus\{i\}\ne\varnothing\), assign the remainder
proportionally to the other participants' responsibilities:
\[
 c_j(A)=(C-y)\frac{r_j(A)}{R(A)-r_i(A)},
 \qquad j\ne i.
\]
The bounds on \(y\) ensure that every payment is nonnegative and no ceiling is
exceeded. If \(A=\{i\}\), budget balance already forces \(y=C\). In either
case, the extension argument following
Theorem~\ref{thm:payment-characterization} embeds these charges in a supporting
rule.
\end{proof}

Corollary~\ref{cor:individual-payments} shows that participant \(i\) can be
exempted from paying exactly when the others can cover the cost,
\(C\leq b[R(A)-r_i(A)]\). She must pay exactly when
this inequality fails. Her smallest payment is then the uncovered remainder.
She can bear the entire cost exactly when \(C\leq br_i(A)\). At capacity,
\(C=bR(A)\), every ceiling binds and the charges are uniquely determined:
\[
 c_i(A)=br_i(A),\qquad i\in A.
\]
At an intermediate positive cost below capacity, a connected active set with at
least two participants admits multiple ways to divide the cost.

To determine when at least one participant must pay, let
\[
 r_{\max}(A)=\max_{i\in A}r_i(A).
\]
Corollary~\ref{cor:individual-payments} implies that at least one participant
is an unavoidable payer if and only if
\begin{equation}
 \frac{C}{bR(A)}>
 1-\frac{r_{\max}(A)}{R(A)}.
 \label{eq:unavoidable-threshold}
\end{equation}
A participant with responsibility \(r_i(A)\) first becomes unavoidable when
\(C>b[R(A)-r_i(A)]\). A larger responsibility lowers this threshold because it
reduces the amount the others can cover. Participants with the greatest
responsibility are therefore the first whom rising costs force to pay.

Figure~\ref{fig:payment-example} makes these bounds concrete by comparing
the feasible divisions of the same operating cost in two three-participant
networks. For a fixed active set \(A\), the charges at \(A\) that can be
embedded in supporting rules form a polytope, a region defined by finitely many
linear restrictions. Because the charges sum to \(C\), this polytope lies in a
simplex whose corners assign the entire cost to one participant. In the complete graph \(K_3\), the operating
cost exhausts the network's capacity,
so every participant must pay one. In the path \(P_3\), the center must pay between
one and two, whereas either endpoint player can be exempted, though not both at once.
The path therefore identifies the center as an unavoidable payer but leaves
the institution to divide the remaining cost.

\begin{figure}[t]
    \centering
    \includegraphics[width=\textwidth]{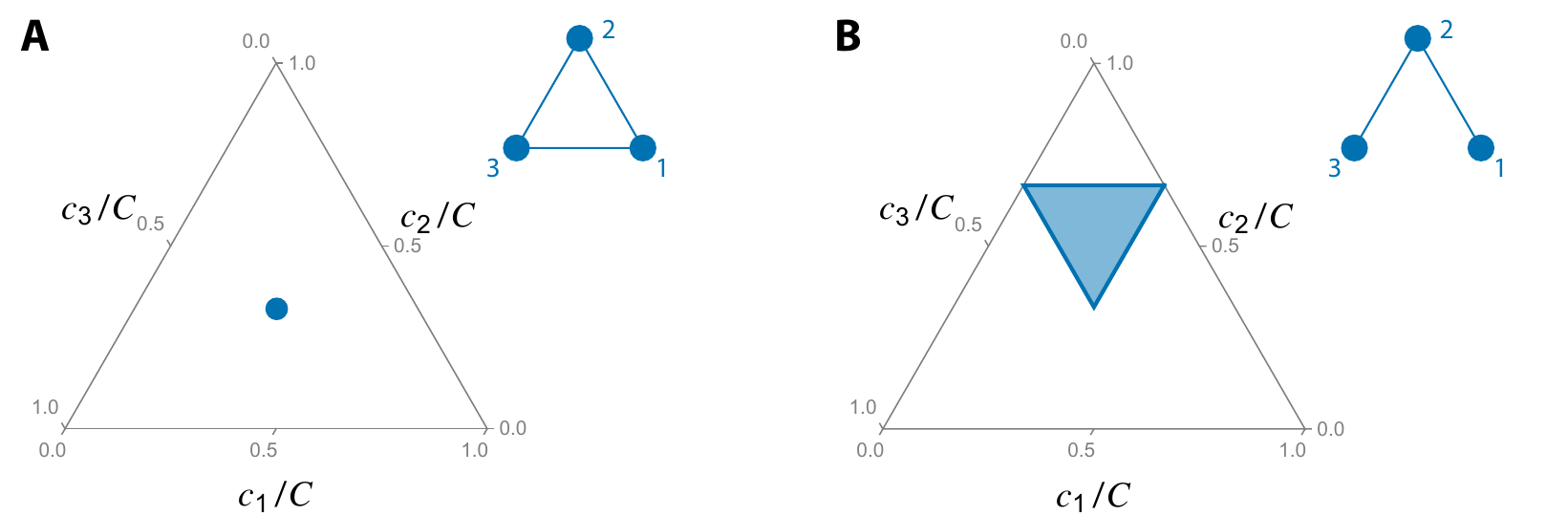}
    \caption{\textbf{Network structure changes who must pay.} All three
    participants are active, \(b=1\), and \(C=3\). Each point represents one
    division of the cost, with coordinates \(c_i(A)/C\). \textbf{A.} In
    \(K_3\), every participant has responsibility one. The operating cost
    exhausts total capacity, so \(c(A)=(1,1,1)\) is the only possible cost
    division. \textbf{B.} In \(P_3\), participant 2 is the center and has
    responsibility two. She must pay between one and two. Each endpoint can
    be exempted, but not both at once. The shaded triangle is exactly the set
    of current cost divisions that can be embedded in supporting rules.}
    \label{fig:payment-example}
\end{figure}

\FloatBarrier

\subsection{Network architecture and cost-bearing capacity}
\label{subsec:architecture}

Architecture determines both the largest cost-bearing capacity of a group and
how the corresponding payment limits are distributed across its participants.
We now identify the network architectures that minimize
and maximize this capacity and then ask when
that architecture forces someone to pay. Write \(P_m\) for the path and
\(K_{1,m-1}\) for the star on \(m\) vertices.

\begin{theorem}[Sharp aggregate-responsibility bound]
\label{thm:graph-bound}
Let \(A\) be connected with \(m=|A|\geq3\). Then
\begin{equation}
 m\leq R(A)\leq
 F_m:=\left\lfloor\frac{(m+1)^2}{4}\right\rfloor.
 \label{eq:R-bound}
\end{equation}
Equality in the upper bound holds precisely when
\[
 G[A]\cong
 \begin{cases}
 P_3,&m=3,\\
 P_4\text{ or }K_{1,3},&m=4,\\
 P_m,&m\geq5.
 \end{cases}.
\]
Equality in the lower bound holds precisely when deleting each participant
leaves the network induced by the remaining participants connected.
\end{theorem}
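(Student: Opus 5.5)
The plan is to work with the induced graph \(H=G[A]\), which is connected on \(m\) vertices. Write \(r_i=m-\LCC(H-i)\), where \(\LCC(H-i)\) is the size of the largest component of \(H-i\). The lower bound is immediate. Since \(H\) is connected, \(\LCC(H-i)\le m-1\), so \(r_i\ge1\) and \(R(A)\ge m\). Equality holds iff \(\LCC(H-i)=m-1\) for every \(i\). Because \(H-i\) has exactly \(m-1\) vertices, this happens iff every \(H-i\) is connected, which is the stated characterization.

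For the upper bound I would first reduce to trees. Let \(T\) be any spanning tree of \(H\). Then \(T-i\subseteq H-i\), so every component of \(T-i\) lies inside a component of \(H-i\). Hence \(\LCC(H-i)\ge\LCC(T-i)\), so \(r_i(H)\le r_i(T)\) and \(R(H)\le R(T)\). It therefore suffices to bound \(R\) over trees on \(m\) vertices.

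For a tree \(T\), root it at a centroid \(c\), so every branch of \(T-c\) has at most \(m/2\) vertices. For \(v\ne c\), let \(s(v)\) be the size of the subtree hanging from \(v\). The components of \(T-v\) are:
\begin{itemize}
\item the child subtrees of \(v\), each of size less than \(s(v)\le m/2\);
\item the component containing \(c\), of size \(m-s(v)\ge m/2\).
\end{itemize}
So the component containing \(c\) is the largest, and \(r_v=s(v)\). For the centroid itself, \(r_c=m-k_{\max}\), where \(k_1,\dots,k_t\) are the branch sizes at \(c\), with \(\sum k_j=m-1\), each \(k_j\le\lfloor m/2\rfloor\), and \(k_{\max}\) the largest. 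Also \(\sum_{v\ne c}s(v)=\sum_{u}d(c,u)\). Within a branch of size \(k\), the distance sum is at most \(1+2+\dots+k=k(k+1)/2\), with equality iff that branch is a path starting at the neighbour of \(c\). This gives
\[
R(T)\le m-k_{\max}+\sum_j \tfrac{k_j(k_j+1)}{2}.
\]
I would then maximize the right-hand side over integer compositions of \(m-1\) with parts at most \(\lfloor m/2\rfloor\). The first step is to show that using at most two branches is optimal. Merging two non-maximal branches of sizes \(a,b\) increases the sum by \(ab>0\). If the merged branch becomes the new maximum, the subtracted term grows, but I expect the gain \(ab\) to cover this; this is where the size cap must be respected. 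With two branches \(a\ge b\), \(a+b=m-1\), the expression is \(m+a(a-1)/2+b(b+1)/2\). By convexity it is maximized at \(a=\lfloor m/2\rfloor\), and a direct parity check then gives \(\lfloor (m+1)^2/4\rfloor\). This equals the path value, which confirms sharpness; for example, \(m=8\) gives \(1+2+3+4+4+3+2+1=20\).

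The main obstacle is the equality characterization. Within trees, equality forces every branch at \(c\) to be a path and the composition to be optimal. I need to show that for \(m\ge5\) only the two-branch split \((\lfloor m/2\rfloor,\lceil m/2\rceil-1)\) attains the maximum, which yields \(P_m\). For \(m=4\), the three-branch star \((1,1,1)\) ties, giving \(4-1+3=6=F_4\); for \(m=3\) the only tree is \(P_3\). I would handle \(m\le4\), and the near-ties in the merging step, by direct enumeration.

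It then remains to show that if \(H\) is not a tree, the inequality \(R(H)\le R(T)\) is strict for a suitable spanning tree \(T\). Take \(T\) to be one of the extremal trees, so \(T\) is a path (or the star when \(m=4\)), and take an extra edge \(xy\in H\setminus T\). On a path, \(xy\) is a chord closing a cycle. Removing an interior vertex of the path segment between \(x\) and \(y\) now leaves a strictly larger largest component, so that vertex's responsibility drops. On \(K_{1,3}\), any extra edge joins two leaves, and removing the centre then leaves a component of size \(2\) instead of \(1\). In both cases \(R(H)<F_m\), completing the equality case.
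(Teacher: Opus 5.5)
Your outline follows the paper's proof closely. The shared steps are the lower bound, spanning-tree domination, and rooting at a centroid so that \(r_v=s(v)\) for \(v\ne c\) and \(r_c=m-k_{\max}\). The per-branch bound \(k(k+1)/2\), with equality only for an endpoint-rooted path, is also shared; your distance-sum reformulation is a fine substitute for the paper's inductive rooted-subtree lemma. The chord and leaf-edge arguments excluding non-trees also match. The gap is in the one step where you diverge: maximizing \(m-k_{\max}+\sum_j k_j(k_j+1)/2\) over compositions of \(m-1\) with parts at most \(\lfloor m/2\rfloor\) by merging non-maximal branches. Your expectation that the gain covers the larger subtracted term is correct: if \(a,b\le k_{\max}<a+b\), the net change is \(ab-(a+b-k_{\max})=(a-1)(b-1)+k_{\max}-1\ge0\). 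The real obstruction is the cap, which cannot be dropped: without it, a single branch of size \(m-1\) gives \(1+m(m-1)/2>F_m\) for \(m\ge4\). The cap can block every merge. Take three paths of four vertices attached to a common center. Then \(m=13\), the center is the centroid, and the branch sizes are \((4,4,4)\), but any merge creates a branch of size \(8>6\). So ``at most two branches is optimal'' is not established. Neither is the uniqueness of \(P_m\) for \(m\ge5\), which you explicitly leave open. Also, the zero-gain merge of two singletons when \(k_{\max}=1\) occurs at every star \(K_{1,m-1}\), not only for \(m\le4\), so small-case enumeration cannot dispose of it.

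The paper avoids compositions with the estimate \(\sum_j k_j^2\le k_{\max}\sum_j k_j=k_{\max}(m-1)\). This turns your bound into \(R(T)\le m+\tfrac{m-1}{2}+\tfrac{m-3}{2}k_{\max}\), which is nondecreasing in \(k_{\max}\).

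For \(m=2h+1\), this gives \((h+1)^2\) at \(k_{\max}=h\). For \(h\ge2\), equality forces every branch to have size \(h\), hence exactly two branches and \(T\cong P_m\).

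For \(m=2h\) the estimate is not tight, so the paper splits cases.
\begin{itemize}
\item If \(k_{\max}=h\), the remaining branches total \(h-1\). Strict superadditivity of \(g(t)=t(t+1)/2\) bounds their contribution by \(g(h-1)\), with equality only for a single branch. This gives \(h(h+1)\) and \(P_m\).
\item If \(k_{\max}\le h-1\), the estimate gives at most \((2h^2+h+2)/2\), which is less than \(h(h+1)\) for \(h\ge3\). When \(h=2\), this case forces \(K_{1,3}\).
\end{itemize}

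If you prefer to keep an exchange argument, replace merging by transfers. Moving one vertex from a branch of size \(b\) into a largest branch of size \(a\ge b\) changes the value by \(a-b\ge0\). Do this until the largest branch has size \(\lfloor m/2\rfloor\), then merge the rest, whose total \(\lceil m/2\rceil-1\) fits under the cap. That proves the bound, but you would still have to track which steps are strict to obtain the equality cases.
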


\ref{app:extremal} gives the proof and summarize here its main steps. Deleting edges weakly increases
withdrawal losses, so the upper-bound problem reduces to trees. A centroid
argument then establishes the bound and characterizes its equality cases. The
only exceptional size is \(m=4\): the star \(K_{1,3}\), with responsibilities
\(\{3,1,1,1\}\), and the path \(P_4\), with responsibilities
\(\{1,2,2,1\}\), both have total responsibility six. The lower bound is
attained precisely when every responsibility equals one, or equivalently,
when the network is 2-vertex-connected. Examples include every cycle \(C_m\),
complete graph \(K_m\), and complete bipartite graph \(K_{s,t}\) with
\(s,t\geq2\).

Theorem~\ref{thm:graph-bound} and Corollary~\ref{cor:individual-payments}
together give thresholds that hold for every connected architecture. Let
\[
 \lambda=\frac{C}{bR(A)}.
\]
Thus \(\lambda\) is the fraction of the active set's capacity being used.

\begin{corollary}[When architecture forces a payer]
\label{cor:unavoidable-threshold}
Let \(A\) be connected and Nash supportable at cost \(C\), with
\(m=|A|\geq3\). Then
\begin{itemize}
  \item[(i)] if \(\lambda\leq1/2\), every participant can be exempted under
  some supporting rule;
  \item[(ii)] if \(\lambda>1-1/m\), some participant must pay under every
  supporting rule.
\end{itemize}
\end{corollary}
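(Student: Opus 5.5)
The plan is to reduce both parts to the characterization in Corollary~\ref{cor:individual-payments}. By that corollary, participant \(i\) can be exempted under some supporting rule exactly when \(C\leq b[R(A)-r_i(A)]\). Dividing by \(bR(A)\), this becomes \(\lambda\leq 1-r_i(A)/R(A)\). Every participant can therefore be exempted exactly when \(\lambda\leq 1-r_{\max}(A)/R(A)\). By \eqref{eq:unavoidable-threshold}, some participant must pay exactly when the reverse strict inequality holds. Part~(i) then reduces to the structural claim \(r_{\max}(A)\leq R(A)/2\), and part~(ii) to the claim \(r_{\max}(A)\geq R(A)/m\).

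Part~(ii) is immediate, since the maximum of the \(m\) responsibilities is at least their average, \(r_{\max}(A)\geq R(A)/m\). Hence \(1-r_{\max}(A)/R(A)\leq 1-1/m<\lambda\), and \eqref{eq:unavoidable-threshold} yields an unavoidable payer.

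For part~(i) I must show that no single responsibility exceeds the sum of all the others. Let \(i^*\) attain \(r_{\max}\). Since \(A\) is connected, \(\LCC_G(A)=m\) and \(r_{i^*}(A)=m-\LCC_G(A\setminus\{i^*\})\). Removing \(i^*\) splits \(A\setminus\{i^*\}\) into components \(D_1,\dots,D_k\), with \(D_1\) a largest one, so that \(r_{i^*}=m-|D_1|\).

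The key step is to find enough responsibility among the other participants. First I would show that each \(j\ne i^*\) has \(r_j\geq 1\), because \(A\) is connected. This gives \(R(A)-r_{i^*}\geq m-1\geq r_{i^*}\), since \(r_{i^*}\leq m-1\). That settles the claim immediately: \(r_{\max}\leq m-1\leq R(A)-r_{\max}\), so \(r_{\max}\leq R(A)/2\). Note also that \(m\geq3\) is not needed for this inequality; it matters only so that the statement is not vacuous.

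Most of the work therefore lies in the reduction through Corollary~\ref{cor:individual-payments}. The only facts I need are \(r_j\geq1\) for every participant of a connected set, which the paper notes when responsibility is defined, and \(r_{\max}\leq m-1\), which follows because \(\LCC_G(A\setminus\{i\})\geq 1\) when \(m\geq2\). I do not expect a serious obstacle. The point needing care is the equality case at \(\lambda=1/2\), where the non-strict inequality must be kept throughout. It is also worth remarking on sharpness: the star attains \(r_{\max}=m-1=R(A)-r_{\max}\), and \(2\)-connected graphs have every \(r_i=1\), so both thresholds are attained.
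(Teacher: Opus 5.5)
Your proof is correct and follows the paper's argument essentially step for step. You reduce both parts via Corollary~\ref{cor:individual-payments} to the threshold $1-r_{\max}(A)/R(A)$, bound it below by $1/2$ using $R(A)\geq r_{\max}(A)+(m-1)\geq 2r_{\max}(A)$, and bound it above by $1-1/m$ because a maximum is at least the average; your sharpness remarks about the star and $2$-connected sets match the paper's equality discussion.
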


\begin{proof}
By equation~\eqref{eq:unavoidable-threshold}, architecture forces a
participant to pay when
\[
 \lambda>1-\frac{r_{\max}(A)}{R(A)}.
\]
Every responsibility is at least one and \(r_{\max}(A)\leq m-1\), so
\[
 R(A)\geq r_{\max}(A)+(m-1)\geq2r_{\max}(A).
\]
The threshold is therefore at least \(1/2\). Equality requires one
responsibility to equal \(m-1\) and all others to equal one. Deleting the
first participant leaves only isolated vertices, so connectedness makes the
network a star.

A maximum is at least an average, so
\[
 \frac{r_{\max}(A)}{R(A)}\geq\frac1m.
\]
Equality requires all responsibilities to be equal. A leaf of any spanning
tree has responsibility one, so equality holds precisely when every
responsibility equals one, or equivalently, when every one-participant
deletion leaves the remaining network connected. The threshold is therefore
at most \(1-1/m\). The sharp-case claims now follow from
Corollary~\ref{cor:individual-payments}.
\end{proof}

Both bounds in Corollary~\ref{cor:unavoidable-threshold} are sharp. In a star,
the hub must pay if and only if \(\lambda>1/2\). If every one-participant
deletion leaves the remaining network connected, each participant can be
exempted under some supporting rule when \(\lambda\leq1-1/m\), whereas every
participant must pay when \(\lambda>1-1/m\). The result shows when architecture
begins to restrict the institution's choice. A position with high withdrawal
responsibility forces that restriction early, while evenly distributed responsibility
postpones it.

For \(m\geq5\), only the path maximizes cost-bearing capacity. On a path,
\(r_{\max}(A)=\lceil m/2\rceil\), so at least one participant becomes an
unavoidable payer only after the fraction of capacity being used exceeds
\[
 1-\frac{\lceil m/2\rceil}{F_m}.
\]
This threshold converges to one as \(m\) grows. The path therefore has the
largest cost-bearing capacity while allowing the institution to exempt from paying any
chosen participant over a wide range of costs. Specifically, whenever a
supportable cost uses no more than this fraction of capacity, each participant
can be exempted under some supporting rule.
Equation~\eqref{eq:responsibility-survival}
identifies the cost of that advantage: among connected networks of the same
size, the path also minimizes the connected output expected to survive one
uniformly selected withdrawal.

\subsection{Paying no more for greater responsibility}
\label{subsec:responsibility-order}

The preceding results leave the institution a choice among sustainable cost
divisions. Proportional division in
equation~\eqref{eq:proportional-responsibility} assigns a larger charge to a
participant with greater withdrawal responsibility whenever \(C>0\). That
ordering may suit a commercial network if a position with high withdrawal
responsibility also earns more revenue. A volunteer organization may prefer
the opposite restriction when a provider whose withdrawal destroys more
output already contributes time or effort.  We ask what the group gives up when charges may
not rise with responsibility:
\[
 r_i(A)>r_j(A)
 \quad\Longrightarrow\quad
 c_i(A)\leq c_j(A).
\]

Under this restriction, a group of \(m=|A|\) participants can finance at most at operation cost of 
\(bm\), instead of the unrestricted maximum \(bR(A)\). The restriction
therefore reduces capacity exactly when \(R(A)>m\). Every connected graph has
a participant with
responsibility one, whose withdrawal ceiling is \(b\). No more-responsible
participant may pay more, so every charge is at most \(b\).
The group can therefore cover at most \(bm\), and equal division attains this
bound when entry is also deterred.
Proposition~\ref{prop:responsibility-nonincreasing} in
\ref{app:payment-results} gives the exact weak and strict conditions.
If \(R(A)>m\), every cost-division rule supporting \(A\) at a cost above \(bm\)
must charge some more-responsible participant more than a participant with
responsibility one. On a path with \(m\geq3\), the restriction reduces capacity from
\(bF_m\) to \(bm\), a loss of
\[
b\left[
 \left\lfloor\frac{(m+1)^2}{4}\right\rfloor-m
 \right]
\]
that grows cuadratically as $m$ increases.

\subsection{Social welfare and the voluntary-financing gap}
\label{subsec:welfare}

The preceding results identify the largest operating cost that participants
can finance voluntarily. We now ask whether voluntary finance can fail even
when full participation creates more benefit than it costs.

Define \emph{social welfare} \(W\) as the sum of all \(N\) players' payoffs.
Every player receives the public benefit, and the participants' payments
finance the real operating expenditure \(C\). Hence
\[
 W(A)=Nb\LCC_G(A)-C,
\]
for every nonempty active set \(A\), while \(W(\varnothing)=0\). The division
of the cost affects participation incentives but not social welfare.

Because \(G\) is connected, full participation produces \(N\) units of output
and gross social benefit \(bN^2\). It creates more total benefit than it costs
exactly when \(C<bN^2\). By
Theorem~\ref{thm:payment-characterization}, participants can finance full
participation exactly when \(C\leq bR(V)\). The next proposition identifies
when these two thresholds conflict.

\begin{proposition}[Social welfare and the voluntary-financing gap]
\label{prop:welfare-gap}
Full participation uniquely maximizes social welfare but cannot be sustained
by any cost-division rule if and only if
\[
 bR(V)<C<bN^2.
\]
Such an operating cost exists if and only if \(R(V)<N^2\), and every connected
network satisfies this inequality.
\end{proposition}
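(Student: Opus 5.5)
The plan is to split the biconditional into its two component properties and treat each separately, then handle the existence claim.

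\emph{Unique welfare maximization.} Since \(W(\varnothing)=0\) and \(W(A)=Nb\LCC_G(A)-C\) for nonempty \(A\), the maximum over nonempty sets is attained exactly where \(\LCC_G(A)\) is maximal. For a proper set we have \(\LCC_G(A)\leq|A|\leq N-1\), while connectedness of \(G\) gives \(\LCC_G(V)=N\). Hence \(V\) strictly beats every proper nonempty set, with margin at least \(Nb>0\), for every \(C\). The only competitor is therefore the empty set, so full participation is the unique maximizer if and only if \(bN^2-C>0\), that is, \(C<bN^2\).

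\emph{Unsustainability.} Theorem~\ref{thm:payment-characterization} with \(A=V\) says that full participation is Nash supportable if and only if \(0\leq C\leq bR(V)\). Because \(C\geq0\) by assumption, no cost-division rule sustains \(V\) exactly when \(C>bR(V)\). Intersecting this with the previous condition gives the stated equivalence \(bR(V)<C<bN^2\).

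\emph{Existence and the universal inequality.} The open interval \((bR(V),bN^2)\) is nonempty if and only if \(R(V)<N^2\), using \(b>0\). This is also compatible with the constraint \(C\geq0\), since \(R(V)\geq0\). To show that every connected \(G\) satisfies \(R(V)<N^2\), I would apply Theorem~\ref{thm:graph-bound} with \(A=V\) and \(m=N\geq3\). It gives \(R(V)\leq\lfloor (N+1)^2/4\rfloor\leq (N+1)^2/4\), and \((N+1)^2<4N^2\) holds because \(N+1<2N\) for \(N\geq 2\). A cruder route also works: identity~\eqref{eq:responsibility-survival} gives \(R(V)/N^2=1-\mathbb E[\LCC_G(V\setminus\{I\})]/N\). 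Since \(N\geq3\), removing any vertex leaves a nonempty set with \(\LCC\geq1\), so \(R(V)\leq N^2-N<N^2\).

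No step is a real obstacle. The only care needed is in the welfare step, where I must compare \(V\) against the empty set separately from the other sets, and must use connectedness of \(G\) to get \(\LCC_G(V)=N\).
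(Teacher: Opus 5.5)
Your proof is correct and follows the paper's argument essentially step for step: $V$ strictly beats every proper nonempty set because $\LCC_G(A)\leq N-1<N=\LCC_G(V)$, the empty set is compared separately to obtain $C<bN^2$, Theorem~\ref{thm:payment-characterization} gives unsustainability exactly when $C>bR(V)$, and Theorem~\ref{thm:graph-bound} gives $R(V)\leq F_N<N^2$. Your alternative bound $R(V)\leq N^2-N$ (equivalently, $r_i(V)\leq N-1$ for each $i$) is a worthwhile remark, since it shows that the final claim does not actually require the extremal theorem.
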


\begin{proof}
Every proper nonempty active set has largest connected component at most
\(N-1\), whereas \(G[V]\) is connected and has size \(N\). Because every
nonempty active set incurs the same cost, full participation has strictly
greater social welfare than every proper nonempty active set. It also has
greater welfare than inactivity exactly when \(W(V)=bN^2-C>0\). Thus full
participation uniquely maximizes social welfare if and only if \(C<bN^2\).

Theorem~\ref{thm:payment-characterization} makes full participation Nash
supportable exactly when \(C\leq bR(V)\). Combining the two conditions gives
the stated interval. Such a cost exists if and only if \(R(V)<N^2\). Finally,
Theorem~\ref{thm:graph-bound} gives \(R(V)\leq F_N<N^2\) for every connected
network.
\end{proof}

The interval \(\bigl(bR(V),bN^2\bigr)\) is the voluntary-financing gap. Its
width is \(b[N^2-R(V)]\). The gap arises because a participant and society value
the same withdrawal differently. If participant \(i\) withdraws, her own
benefit falls by \(br_i(V)\), but gross social benefit falls by \(Nbr_i(V)\).
Her payment ceiling reflects only the first loss.

To compare networks of different sizes, express voluntary capacity relative to
gross social benefit:
\[
 \frac{bR(V)}{bN^2}=\frac{R(V)}{N^2}.
\]
Thus \(R(V)/N^2\) is the maximum financeable operating cost as a fraction of
full participation's gross social benefit, while \(1-R(V)/N^2\) is the gap's
width as a fraction of that benefit. Theorem~\ref{thm:graph-bound} gives
\[
 \frac{R(V)}{N^2}\leq\frac{F_N}{N^2},
 \qquad
 \lim_{N\to\infty}\frac{F_N}{N^2}=\frac14.
\]
For \(N\geq5\), only the path attains the upper bound. The maximum financeable
fraction therefore converges to one quarter.

For example, let \(C=\tfrac12 bN^2\). On a large path, capacity is about
\(\tfrac14 bN^2\), so the bill is about twice the amount participants can
finance. Full participation uniquely maximizes social welfare, yet no
cost-division rule can sustain it. More generally, fix
\(\alpha\in(1/4,1)\) and let \(C=\alpha bN^2\). For all sufficiently large
\(N\), no connected network can finance full participation, although full
participation remains the unique social-welfare maximum. Failure of voluntary
finance therefore need not mean that the operating cost exceeds gross social
benefit.

\section{Discussion and conclusion}
\label{sec:discussion}

For a connected group and an operating cost that can be supported, each participant's payment lies
between a floor and a ceiling determined by the network connections. Her ceiling is the
value of the connected output her withdrawal would destroy. Her floor is zero
if the other participants can cover the bill within their own ceilings, and
otherwise equals the amount they cannot cover. Thus, a participant can be
exempted exactly when the others can bear the entire bill, whereas she can bear
the entire bill herself only when her own ceiling is at least as large as the
bill. As the bill rises, participants with larger withdrawal losses become
impossible to exempt first. The network therefore determines each
participant's feasible payment range, while the institution generally chooses
a division within those ranges. Only at the group's cost-bearing
capacity---the sum of all payment ceilings---do all ranges collapse to a
single division.

Network architecture therefore affects finance even when group size, current output,
benefits, and the operating cost are held fixed. Additional routes can
preserve connected output after a participant withdraws, but this resilience
also improves some participants' exit options and can reduce the charges they
are willing to bear. Conversely, concentrating output dependence on particular
positions can increase cost-bearing capacity while reducing expected surviving
output after withdrawal. This tradeoff is not a prescription to build paths or
otherwise make systems fragile. Our comparison concerns a single uniformly
selected withdrawal, not targeted, repeated, or correlated failures. The broader
point is that technical resilience and financing incentives are jointly shaped
by network architecture and therefore cannot always be designed independently
in community infrastructure, conservation partnerships, volunteer
organizations, and other collective projects built from connected
contributions.

This tradeoff also constrains how operating costs can be divided. The extra
cost-bearing capacity created when some withdrawals destroy more than one unit
of output can be realized only if the institution may assign higher charges to
participants with larger withdrawal losses than to those whose withdrawal
destroys only one unit. Any organization that requires charges not to increase
with withdrawal loss must forgo this additional capacity. Withdrawal loss
therefore measures a participation constraint, not moral responsibility,
bargaining power, or entitlement to any particular division of costs. Network
structure determines which cost divisions are feasible. Social norms and
institutions determine which feasible division is chosen.

For public economics and collective-action research, our welfare result
clarifies one source of failure in voluntary finance. A participant deciding
whether to withdraw internalizes only her own loss, whereas social welfare also
counts the losses her withdrawal imposes on other beneficiaries. Voluntary
payments can therefore fail to sustain full participation even when the total
benefit it creates exceeds the operating cost. Network architecture can narrow
this financing gap, but not eliminate it: even the path, which maximizes
cost-bearing capacity, can finance asymptotically only one quarter of full
participation's gross social benefit. Instruments that weaken the link between
withdrawal and payment---such as charging inactive beneficiaries, subsidizing
the operation externally, making some benefits excludable, or enforcing
payments after withdrawal---could relax this constraint. 

Our supportability results assume that, before players choose whether to
participate, the institution can commit to a nonnegative, budget-balanced
schedule of charges for every possible active set, with charges allowed to
depend on player identity. The result is therefore an existence statement: it
shows that some such schedule can sustain a target group, but it does not
predict which schedule will be chosen, which group will form, or which
equilibrium will be selected when several exist. Full participation avoids the
entry problem. By contrast, sustaining a group that excludes some players may
depend on charges specified for the active set that would result after an
outsider enters, and those off-equilibrium charges may not be credible if the
institution can renegotiate them after entry. Equilibrium multiplicity also
remains: complete inactivity may coexist with an equilibrium that provides the
connected good. Finally, at maximum cost-bearing capacity, participants are
only weakly willing to remain active, because each is indifferent between
participating and withdrawing.

Bargaining offers a natural next step because it could determine which of the feasible cost divisions characterized here is chosen. If bargaining occurs
before participation, it could determine who pays while leaving the
network-imposed payment limits unchanged. If renegotiation is possible after
entry or withdrawal, however, the announced contingent charges may cease to be
credible. A sequential model could therefore characterize divisions that
remain stable after every participation decision. Bargaining, voting, or a
fairness criterion could then jointly determine membership and payments, with
withdrawal losses shaping participants' outside options. Such an extension
would move from characterizing feasible divisions to predicting which division,
and potentially which group, emerges.

The conclusions that the path maximizes cost-bearing capacity and that the
largest financeable share of gross social benefit approaches one quarter are
specific to our benchmark. The network is fixed, known, unweighted, and
undirected; participation is binary and simultaneous; public output equals the
size of the largest connected component; benefits are common and linear; every
nonempty active set incurs the same real operating cost; and only active
participants pay. Some parts of the analysis extend beyond these assumptions.
\ref{app:payment-results} extends the payment results to disconnected groups,
total payments by subgroups, and rules that prevent charges from increasing
with withdrawal responsibility; \ref{app:partial-information} derives payment
ceilings from local network views; and \ref{app:extensions} considers
coordinated withdrawal, heterogeneous and nonlinear benefits, and immutable
personal costs. The partial-information result is especially stark on growing
paths: neighborhoods small relative to the group justify only a vanishing
share of the cost-bearing capacity available under complete information.

Natural extensions include allowing operating costs to vary with participation,
introducing weighted or directed links, allowing participants to choose network
formation and redundancy, and studying sequential entry, withdrawal cascades,
and richer coalitional deviations. Empirically, one could estimate withdrawal
losses and test whether they predict contributions, negotiated payments, or
membership decisions.

Connections thus determine how much of a public good's value becomes privately
at stake when a participant considers leaving. They can redistribute those
stakes and expand what participants will finance, but they cannot make a
participant bear the losses her withdrawal imposes on everyone else. Fragility
can therefore strengthen voluntary finance, but even the most financeable
network cannot close the gap between the value it creates and what its
participants will pay to sustain it.

\section*{Data availability}

No data were used for the research described in the article.

\section*{Acknowledgments}

This work was partially supported by UNAM-PAPIIT IA02225.

\section*{Declarations}
\subsection*{Competing interests}

The author declares no competing interests.

\appendix

\section{Proof of the sharp aggregate-responsibility bound}
\label{app:extremal}

Let \(H=G[A]\). Thus \(H\) is connected, has vertex set \(A\), and has order
\(m\). For each connected graph \(J\) on \(A\), define
\[
 d_i(J)=m-\LCC_J(A\setminus\{i\}),
 \qquad
 D(J)=\sum_{i\in A}d_i(J).
\]
In particular, \(d_i(H)=r_i(A)\) and \(D(H)=R(A)\). Recall \(F_m\) from
Theorem~\ref{thm:graph-bound}. For every positive integer \(h\),
\[
 F_{2h}=h(h+1),
 \qquad
 F_{2h+1}=(h+1)^2.
\]

We use two lemmas.

\begin{lemma}[Spanning-tree domination]
\label{lem:spanning}
If \(T\) is a spanning tree of a connected graph \(J\), then
\(d_i(J)\leq d_i(T)\) for every \(i\in A\). Consequently,
\(D(J)\leq D(T)\).
\end{lemma}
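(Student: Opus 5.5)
The plan is to show directly that removing edges can only shrink the components that survive a deletion. Fix \(i\in A\) and compare the two graphs induced on \(A\setminus\{i\}\): \(J-i\) and \(T-i\). Since \(T\) is a spanning subgraph of \(J\), the graph \(T-i\) is a spanning subgraph of \(J-i\) on the same vertex set \(A\setminus\{i\}\). Every edge of \(T-i\) is an edge of \(J-i\), so any path in \(T-i\) is also a path in \(J-i\). Hence each connected component of \(T-i\) lies inside a single connected component of \(J-i\).

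From this containment, the largest component of \(T-i\) has size at most the size of the component of \(J-i\) that contains it. That in turn is at most the largest component of \(J-i\). This gives \(\LCC_T(A\setminus\{i\})\leq\LCC_J(A\setminus\{i\})\). The inequality also holds trivially when \(A\setminus\{i\}\) is empty, since both sides are zero.

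Subtracting from \(m\) reverses the inequality and yields \(d_i(J)\leq d_i(T)\). Note that \(J\) and \(T\) share the vertex set \(A\), so the same \(m\) appears in both definitions. Summing over \(i\in A\) then gives \(D(J)\leq D(T)\).

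No step is a real obstacle. The only point needing care is to state the containment of components precisely: the component partition of \(T-i\) refines that of \(J-i\) because \(T-i\subseteq J-i\) edgewise with identical vertex sets. The monotonicity of the largest-component size follows immediately from that refinement.
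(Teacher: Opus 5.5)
Your proposal is correct and follows essentially the same argument as the paper. Because \(T-i\) is an edge-subgraph of \(J-i\) on the same vertex set, every component of \(T-i\) lies inside a component of \(J-i\), so \(\LCC_T(A\setminus\{i\})\leq\LCC_J(A\setminus\{i\})\); subtracting from \(m\) and summing over \(i\) then gives both claims.
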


\begin{proof}
Every component of \(T-i\) is contained in a component of \(J-i\), because
every edge of \(T-i\) is also an edge of \(J-i\). The largest component of
\(J-i\) is therefore at least as large as the largest component of \(T-i\).
Subtracting their orders from \(m\) gives \(d_i(J)\leq d_i(T)\). Summing over
\(i\) proves the second claim.
\end{proof}

\begin{lemma}[Rooted-subtree bound]
\label{lem:branch}
Let \(Q\) be a tree of order \(a\), rooted at \(v_0\). For each
\(v\in V(Q)\), let \(t(v)\) be the number of vertices in the subtree rooted at
\(v\), including \(v\). Then
\[
 \sum_{v\in V(Q)}t(v)\leq\frac{a(a+1)}2.
\]
Equality holds if and only if the tree is a path rooted at an endpoint.
\end{lemma}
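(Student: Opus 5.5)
The sum \(\sum_v t(v)\) counts pairs \((v,w)\) with \(w\) in the subtree rooted at \(v\), that is, with \(v\) an ancestor of \(w\) or \(v=w\). I will exchange the order of summation and count, for each vertex \(w\), how many vertices \(v\) have \(w\) in their subtree. These are exactly the vertices on the unique path from \(v_0\) to \(w\), so their number is \(\operatorname{depth}(w)+1\), where depth is the distance from the root. Hence
\[
 \sum_{v\in V(Q)}t(v)=\sum_{w\in V(Q)}\bigl(\operatorname{depth}(w)+1\bigr).
\]

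The bound then reduces to a statement about depths. List the vertices by nondecreasing depth as \(w_1=v_0,w_2,\dots,w_a\). A vertex at depth \(k\geq1\) has a parent at depth \(k-1\), so each depth level between \(0\) and the maximum depth is occupied. It follows by induction that the \(j\)-th vertex in this order has depth at most \(j-1\): if \(w_j\) had depth \(d\ge j\), the levels \(0,\dots,d-1\) would each contain a vertex, and all of these precede \(w_j\). That gives at least \(d\geq j\) predecessors, whereas \(w_j\) has only \(j-1\). Therefore
\[
 \sum_{w}\bigl(\operatorname{depth}(w)+1\bigr)\leq\sum_{j=1}^{a}j=\frac{a(a+1)}2.
\]

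For equality, every inequality \(\operatorname{depth}(w_j)\le j-1\) must be tight. The depths are then \(0,1,\dots,a-1\), so each level holds exactly one vertex. Each vertex at depth \(k\ge1\) is then the unique child of the unique vertex at depth \(k-1\). The tree is therefore a path with \(v_0\) as one endpoint. Conversely, on a path rooted at an endpoint the subtree sizes are \(a,a-1,\dots,1\), so equality holds.

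There is no real obstacle here. The only step that needs care is the level-occupancy argument, which shows that sorting by depth gives \(\operatorname{depth}(w_j)\le j-1\). I will write that step explicitly, because the equality characterization depends on it.
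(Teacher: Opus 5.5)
Your proof is correct, and it takes a different route from the paper's.

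The paper argues by induction on \(a\). It removes the root and applies the induction hypothesis to each child subtree of order \(a_j\). It then uses strict superadditivity of \(g(t)=t(t+1)/2\), namely \(g(t+y)-g(t)-g(y)=ty>0\), to obtain \(\sum_j g(a_j)\le g(a-1)\), with equality only when the root has a single child. The equality case comes from unwinding this recursion.

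You instead double count, rewriting \(\sum_v t(v)\) as \(a+\sum_w \operatorname{depth}(w)\). You then bound the depths term by term using level occupancy. This argument is non-recursive, and it makes the equality case immediate: the depth profile must be exactly \(0,1,\dots,a-1\), so each level holds one vertex. Your version also shows more clearly why the path is extremal, since it attains the largest possible depth \(j-1\) for every position \(j\) in the depth ordering.

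The paper's version has a different advantage. It establishes superadditivity of \(g\) as a tool, which it reuses in the proof of Theorem~\ref{thm:graph-bound}. In the even case with \(M=h\), that tool bounds the combined contribution of the remaining branches by \(g(h-1)\), with equality only for a single branch.

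One minor wording point: the step you describe as ``by induction'' is really a direct count. At least \(d\) vertices, one at each depth \(0,\dots,d-1\), precede \(w_j\), so \(d\le j-1\) with no induction needed.
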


\begin{proof}
We use induction on \(a\). The claim is immediate for \(a=1\). Suppose the
subtrees rooted at the children of \(v_0\) have orders
\(a_1,\ldots,a_s\), whose sum is \(a-1\). Because \(t(v_0)=a\), the induction
hypothesis gives
\[
 \sum_{v\in V(Q)}t(v)
 \leq a+\sum_{j=1}^s\frac{a_j(a_j+1)}2.
\]
Set \(g(t)=t(t+1)/2\). For positive \(t\) and \(y\),
\[
 g(t+y)-g(t)-g(y)=ty>0.
\]
Merging two nonempty child subtrees therefore strictly increases the sum of
their \(g\)-values. It follows that
\[
 \sum_{j=1}^s g(a_j)\leq g(a-1),
\]
with equality only when \(s=1\). Hence
\[
 \sum_{v\in V(Q)}t(v)\leq a+g(a-1)=g(a).
\]
Equality requires one child at the root and equality recursively in its
subtree. This is precisely a path rooted at an endpoint. Conversely, such a
path has subtree orders \(a,a-1,\ldots,1\), whose sum is \(a(a+1)/2\).
\end{proof}

\begin{proof}[Proof of Theorem~\ref{thm:graph-bound}]
Because \(H-i\) has \(m-1\) vertices, its largest component has order at most
\(m-1\). Thus \(d_i(H)\geq1\) for every \(i\), and
\[
 R(A)=D(H)\geq m.
\]
Equality holds exactly when \(H-i\) is connected for every \(i\). This proves
the lower bound and its equality statement.

We next prove the upper bound. Lemma~\ref{lem:spanning} reduces the problem to
an arbitrary tree \(T\) of order \(m\). Choose a vertex \(v_0\) such that every
component of \(T-v_0\) has at most \(m/2\) vertices. Every finite tree has such
a vertex. To find one, start at an arbitrary vertex and move into a component of the current
vertex's deletion whenever that component contains more than \(m/2\) vertices.
After a move, the component containing the previous vertex has fewer than
\(m/2\) vertices, and every other component is smaller than the one just
entered. The oversized component therefore shrinks at each step, so the
procedure terminates. This vertex is commonly called a centroid.

Let the components of \(T-v_0\) have orders \(a_1,\ldots,a_p\), and set
\[
 M=\max_j a_j.
\]
Then \(M\leq m/2\) and \(\sum_j a_j=m-1\). Root \(T\) at \(v_0\). For each
nonroot vertex \(v\), let \(t(v)\) be the order of its descendant subtree.
Because \(v\) lies in one branch at \(v_0\), we have
\(t(v)\leq M\leq m/2\). In \(T-v\), the component containing \(v_0\) has
order \(m-t(v)\), while every other component has order at most \(t(v)-1\).
The rootward component is therefore largest, and
\[
 d_v(T)=t(v)\qquad(v\ne v_0).
\]
At the root, the largest component has order \(M\), so
\(d_{v_0}(T)=m-M\). Applying Lemma~\ref{lem:branch} to each branch gives
\begin{equation}
 D(T)\leq m-M+\frac12\sum_{j=1}^p(a_j^2+a_j).
 \label{eq:centroid}
\end{equation}

Suppose first that \(m=2h+1\). Then \(M\leq h\),
\(\sum_j a_j=2h\), and
\[
 \sum_j a_j^2\leq M\sum_j a_j=2hM.
\]
Substitution into \eqref{eq:centroid} yields
\[
 D(T)\leq3h+1+(h-1)M
 \leq3h+1+h(h-1)
 =(h+1)^2=F_m.
\]
For \(h\geq2\), equality requires \(M=h\), and equality in the squared-sum
bound requires every positive \(a_j\) to equal \(h\). Since the branch orders
sum to \(2h\), there are exactly two branches of order \(h\). Equality in
Lemma~\ref{lem:branch} makes both branches paths rooted at the endpoint
adjacent to \(v_0\); hence \(T\cong P_m\). When \(h=1\), the two branches are
singletons and \(T\cong P_3\).

Now suppose \(m=2h\) for an integer \(h\geq2\). If \(M=h\), one branch has order \(h\)
and all remaining branches have total order \(h-1\). Strict superadditivity of
\(g\) bounds their total contribution by \(g(h-1)\), with equality only when
there is one remaining branch. Thus
\[
 D(T)\leq h+\frac{h(h+1)}2+\frac{h(h-1)}2
 =h(h+1)=F_m.
\]
Equality in the rooted-subtree bound makes the two branches endpoint-rooted
paths of orders \(h\) and \(h-1\), so \(T\cong P_m\).

It remains to consider \(M\leq h-1\). Using
\(\sum_j a_j^2\leq M\sum_j a_j=M(2h-1)\) in
\eqref{eq:centroid}, we obtain
\[
 D(T)
 \leq\frac{6h-1+(2h-3)M}{2}
 \leq\frac{2h^2+h+2}{2}.
\]
For \(h\geq3\), this is strictly less than \(h(h+1)=F_m\), since the
difference is \((h-2)/2>0\). If \(h=2\), then \(M\leq1\), and the three
branch orders, whose sum is three, must all equal one. Thus
\(T\cong K_{1,3}\), whose responsibilities are \(3,1,1,1\), and
\(D(T)=6=F_4\). The other equality tree of order four, obtained when \(M=h\),
is \(P_4\).

We have proved the upper bound and classified the equality trees. To verify
attainment, label the vertices of \(P_m\) consecutively. Then
\[
 d_i(P_m)=m-\max\{i-1,m-i\}=\min\{i,m+1-i\}.
\]
Consequently,
\[
 D(P_{2h})=2\sum_{i=1}^h i=h(h+1),
\]
and
\[
 D(P_{2h+1})=2\sum_{i=1}^h i+(h+1)=(h+1)^2.
\]
For \(K_{1,3}\), the center has responsibility three and each leaf has
responsibility one, giving total six.

Finally, we exclude additional edges. Suppose \(H\) attains \(F_m\), and let
\(T\) be an arbitrary spanning tree of \(H\). Spanning-tree domination and the tree
bound imply
\[
 F_m=D(H)\leq D(T)\leq F_m.
\]
Thus \(T\) is one of the equality trees just identified, and
\(d_i(H)=d_i(T)\) for every vertex \(i\).

If \(T\cong P_m\) and \(H\) contains an additional edge, that edge joins two
nonconsecutive path vertices \(v_p\) and \(v_q\). Choose \(v_s\) with
\(p<s<q\). The graph \(T-v_s\) has two nonempty components, whereas the
additional edge connects them in \(H-v_s\). Hence
\(d_{v_s}(H)=1<d_{v_s}(T)\), a contradiction. If
\(T\cong K_{1,3}\), every additional edge joins two leaves. Deleting the
center then leaves a component of at least two vertices in \(H\), whereas all
three components in \(T\) are singletons. The center's withdrawal responsibility
is therefore strictly smaller in \(H\), another contradiction. Hence \(H\)
contains no
edge outside \(T\).

Since \(D(H)=R(A)\), the upper-equality graphs are exactly \(P_3\) for \(m=3\),
\(P_4\) and \(K_{1,3}\) for \(m=4\), and \(P_m\) for \(m\geq5\).
\end{proof}

\section{Supplementary payment results}
\label{app:payment-results}

\subsection{Disconnected active sets}

The main theorem specializes to connected active sets. The same argument
also characterizes arbitrary nonempty sets. Define
\[
 R(A)=\sum_{i\in A}r_i(A)
\]
and, for a proper set, define
\[
 q_{\max}(A)=\max_{j\notin A}q_j(A),
\]
with \(q_{\max}(V)=0\).

\begin{proposition}[Disconnected supportability]
\label{prop:general-supportability}
An arbitrary nonempty active set \(A\) is Nash supportable if and only if
\[
 bq_{\max}(A)\leq C\leq bR(A).
\]
If \(A\) is proper, it is strictly Nash supportable if and only if every
participant has positive responsibility and
\[
 bq_{\max}(A)<C<bR(A).
\]
At full participation, strict support is possible if and only if every
responsibility is positive and \(0\leq C<bR(V)\).
\end{proposition}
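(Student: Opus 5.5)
The plan is to mirror the proof of Theorem~\ref{thm:payment-characterization}, with Lemma~\ref{lem:participation-conditions} as the only equilibrium tool. That lemma holds for any nonempty \(A\), connected or not.

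\emph{Necessity.} Suppose a rule supports \(A\). Summing the withdrawal ceilings \eqref{eq:withdrawal-ceilings} over \(i\in A\) and using budget balance gives \(C\leq bR(A)\). If \(A\) is proper, pick an outsider \(j\) with \(q_j(A)=q_{\max}(A)\). The entry condition \eqref{eq:entry-charges} gives \(bq_{\max}(A)\leq c_j(A\cup\{j\})\). Nonnegativity and budget balance at \(A\cup\{j\}\) give \(c_j(A\cup\{j\})\leq C\), so \(C\geq bq_{\max}(A)\). For \(A=V\) the lower bound reads \(C\geq0\) and is automatic.

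For the strict version the same argument yields the strict bounds. In addition, if some participant has \(r_i(A)=0\), strictness would require \(c_i(A)<0\), which is impossible. So every responsibility must be positive. When \(A\) is proper, strictness at the maximizing outsider forces \(C>bq_{\max}(A)\). This holds even if \(q_{\max}(A)=0\), since then \(c_j(A\cup\{j\})>0\) and so \(C>0\).

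\emph{Sufficiency.} Assume \(bq_{\max}(A)\leq C\leq bR(A)\).

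\textbf{Current division.} At \(A\) itself I would use the proportional division \(c_i(A)=Cr_i(A)/R(A)\) when \(R(A)>0\).

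\textbf{Degenerate case.} The main subtlety is \(R(A)=0\). Then \(C=0\) and every charge must be zero, which is consistent. Note that \(R(A)\) can vanish for disconnected sets, for example when \(A\) has two equal largest components, so that no single withdrawal reduces \(\LCC_G\).

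\textbf{Entry charges.} For each outsider \(j\), set \(c_j(A\cup\{j\})=bq_j(A)\leq bq_{\max}(A)\leq C\). Spread the nonnegative remainder \(C-bq_j(A)\) equally over the incumbents of \(A\). Since the sets \(A\cup\{j\}\) are distinct, these prescriptions do not conflict.

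\textbf{Other sets.} All remaining active sets receive an arbitrary nonnegative budget-balanced division. Lemma~\ref{lem:participation-conditions} then gives a Nash equilibrium.

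\textbf{Strict construction.} Assume every \(r_i(A)>0\), so \(R(A)>0\), and that the strict bounds hold. The proportional charges are then strictly below the ceilings, because \(C<bR(A)\) (or \(C=0<br_i(A)\) at full participation). For a proper \(A\), assign entrant \(j\) the charge \((C+bq_j(A))/2\). This lies strictly between \(bq_j(A)\) and \(C\), because \(bq_j(A)\leq bq_{\max}(A)<C\). The incumbents share the positive remainder. At full participation there are no entry conditions, so strictness reduces to \(C<bR(V)\) together with positive responsibilities.

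The only real obstacle is bookkeeping: making sure that the entry charge never exceeds \(C\) for every outsider, including those with \(q_j(A)\geq2\) that merge components. This is exactly what the lower bound \(C\geq bq_{\max}(A)\) guarantees, and the boundary cases \(q_{\max}=0\) and \(R(A)=0\) need the separate checks above.
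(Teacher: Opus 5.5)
Your proposal is correct and follows essentially the same argument as the paper. For necessity, you sum the withdrawal ceilings and use an outsider attaining \(q_{\max}(A)\); for sufficiency, you embed proportional current charges (or the zero division when \(R(A)=0\)) together with post-entry charges \(bq_j(A)\). The only difference is cosmetic: for strict support the paper charges each entrant the entire cost \(C\), whereas your midpoint charge \((C+bq_j(A))/2\) works equally well.
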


\begin{proof}
Under any supporting rule, summing the participant ceilings gives
\(C\leq bR(A)\). For each outsider \(j\), the entry condition and
nonnegativity imply
\[
 bq_j(A)\leq c_j(A\cup\{j\})\leq C.
\]
These inequalities give the lower bound.

Conversely, assign the current payments in proportion to positive
responsibilities. If \(R(A)=0\), the upper bound forces \(C=0\), so assign
zero payments. After outsider \(j\) enters, charge her \(bq_j(A)\) and divide
the remainder among the incumbents, and complete the rule arbitrarily at all
other nonempty active sets. Lemma~\ref{lem:participation-conditions} then
supports \(A\).

For strict necessity, nonnegative charges and the strict participant
inequalities require every responsibility to be positive. Summing those
inequalities gives \(C<bR(A)\). If \(A\) is proper, the strict entry
inequalities give \(bq_{\max}(A)<C\).

Conversely, suppose the stated strict conditions hold. Proportional current
payments are strictly smaller than every withdrawal ceiling. If \(A\) is
proper, charge each entrant the entire cost. Since
\(C>bq_{\max}(A)\), every entry is strictly unprofitable. At full
participation there are no entry constraints. This proves sufficiency.
\end{proof}

\subsection{Payments by groups}

For a connected Nash-supportable set \(A\) and \(T\subseteq A\), write
\[
 r_A(T)=\sum_{i\in T}r_i(A),
 \qquad
 c_A(T)=\sum_{i\in T}c_i(A).
\]

\begin{corollary}[Exact group payment bounds]
\label{cor:group-payment-bounds}
Across all cost-division rules supporting \(A\), the payment by \(T\) ranges
over exactly the interval
\[
 \max\left\{0,C-b\bigl[R(A)-r_A(T)\bigr]\right\}
 \leq c_A(T)\leq
 \min\{C,br_A(T)\}.
\]
\end{corollary}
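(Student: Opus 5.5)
The argument follows the proof of Corollary~\ref{cor:individual-payments}, with the single participant \(i\) replaced by the subgroup \(T\). It has two parts: necessity of the bounds, then attainment of every value between them.

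\emph{Necessity.} Fix any rule supporting \(A\). Lemma~\ref{lem:participation-conditions} gives \(c_k(A)\leq br_k(A)\) for every \(k\in A\). Summing over \(k\in T\) gives \(c_A(T)\leq br_A(T)\). Nonnegativity of the other charges and budget balance give \(c_A(T)\leq C\). For the lower bound, nonnegativity gives \(c_A(T)\geq0\). Summing the ceilings over \(A\setminus T\) gives \(c_A(A\setminus T)\leq b[R(A)-r_A(T)]\). Budget balance then yields \(c_A(T)=C-c_A(A\setminus T)\geq C-b[R(A)-r_A(T)]\).

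\emph{Sufficiency.} Fix \(y\) in the displayed interval. Split \(y\) within \(T\) and \(C-y\) within \(A\setminus T\), each proportionally to responsibilities:
\[
 c_k(A)=y\,\frac{r_k(A)}{r_A(T)}\quad(k\in T),
 \qquad
 c_k(A)=(C-y)\,\frac{r_k(A)}{R(A)-r_A(T)}\quad(k\in A\setminus T).
\]
These weights are well defined because \(A\) is connected, so every \(r_k(A)\geq1\). Hence \(r_A(T)>0\) whenever \(T\ne\varnothing\), and \(R(A)-r_A(T)>0\) whenever \(T\ne A\). The charges have the required properties:
\begin{itemize}
 \item[(a)] They are nonnegative because \(0\leq y\leq C\).
 \item[(b)] They sum to \(C\).
 \item[(c)] On \(T\), \(c_k(A)=r_k(A)\,y/r_A(T)\leq br_k(A)\) because \(y\leq br_A(T)\).
 \item[(d)] On \(A\setminus T\), \(c_k(A)=r_k(A)\,(C-y)/(R(A)-r_A(T))\leq br_k(A)\) because \(C-y\leq b[R(A)-r_A(T)]\).
\end{itemize}

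The degenerate cases \(T=\varnothing\) and \(T=A\) need separate treatment. If \(T=\varnothing\), then \(r_A(T)=0\), so the upper bound forces the interval to be \(\{0\}\), and only the second formula is used. If \(T=A\), then \(R(A)-r_A(T)=0\), so the lower bound gives \(y\geq C\) and the interval collapses to \(\{C\}\). Only the first formula is used, with \(y=C\leq bR(A)\) by supportability.

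Finally, the extension argument following Theorem~\ref{thm:payment-characterization} embeds this current division in a supporting rule. That argument attaches the post-entry charges from the proof of the theorem and completes the rule arbitrarily elsewhere. It applies because Theorem~\ref{thm:payment-characterization} guarantees that \(C\) satisfies the applicable entry bound whenever \(A\) is supportable.

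The main obstacle is making sure that the single interval endpoints \(C\) and \(0\) never clash with the ceiling constraints. Both parts of the argument handle this. In the necessity part, the max and min are computed separately, so no clash can arise there. In the sufficiency part, the interval is nonempty because \(C\leq bR(A)\): this gives \(C-b[R(A)-r_A(T)]\leq br_A(T)\), while trivially \(0\leq C\), so every lower endpoint is at most every upper endpoint.
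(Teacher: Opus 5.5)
Your proposal is correct and follows the paper's proof essentially step for step. Necessity comes from summing the withdrawal ceilings over \(T\) and over \(A\setminus T\), combined with nonnegativity and budget balance. Sufficiency comes from splitting \(y\) within \(T\) and \(C-y\) within \(A\setminus T\) in proportion to responsibility, then invoking the extension argument after Theorem~\ref{thm:payment-characterization}. Your explicit handling of the degenerate cases \(T=\varnothing\) and \(T=A\) spells out what the paper compresses into ``an empty group receives total payment zero.''
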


\begin{proof}
Nonnegativity, budget balance, and the individual ceilings give the upper
bound. The complementary group can pay at most \(b[R(A)-r_A(T)]\), which gives
the lower bound.

Conversely, let \(y\) belong to the displayed interval. Assign total payment
\(y\) within \(T\) in proportion to responsibility and assign \(C-y\) within
\(A\setminus T\) in the same way. An empty group receives total payment zero.
The resulting charges are nonnegative, budget balanced, and respect every
individual ceiling. The extension argument following
Theorem~\ref{thm:payment-characterization} therefore embeds them in a supporting
rule.
\end{proof}

Thus a group can bear the entire cost exactly when \(C\leq br_A(T)\). The
individual result in Corollary~\ref{cor:individual-payments} is the special
case \(T=\{i\}\).

\subsection{Responsibility-nonincreasing payments}

Call the charges at \(A\) responsibility-nonincreasing when
\[
 r_i(A)>r_j(A)
 \quad\Longrightarrow\quad
 c_i(A)\leq c_j(A).
\]

\begin{proposition}[Capacity under responsibility-nonincreasing payments]
\label{prop:responsibility-nonincreasing}
Let \(A\) be connected with \(m=|A|\geq2\). A supporting rule with
responsibility-nonincreasing charges at \(A\) exists precisely for
\[
 \begin{cases}
 0\leq C\leq bm,&A=V,\\
 b\leq C\leq bm,&A\subsetneq V.
 \end{cases}
\]
Strict support is possible precisely when the corresponding upper inequality
is strict and, for a proper set, the lower inequality is also strict.
\end{proposition}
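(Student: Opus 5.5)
We prove necessity and sufficiency separately, following the pattern of Theorem~\ref{thm:payment-characterization}, and then treat the strict versions.

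\emph{Necessity.} Suppose a supporting rule has responsibility-nonincreasing charges at \(A\).

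First, because \(A\) is connected with \(m\geq2\), the induced graph \(G[A]\) has a spanning tree. That tree has a leaf \(\ell\), and deleting a leaf keeps the tree connected. Hence \(\LCC_G(A\setminus\{\ell\})=m-1\) and \(r_\ell(A)=1\). By Lemma~\ref{lem:participation-conditions}, \(c_\ell(A)\leq b\).

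Next, every participant has responsibility at least one. A participant \(i\) with \(r_i(A)>1\) satisfies \(r_i(A)>r_\ell(A)\), so the monotonicity restriction gives \(c_i(A)\leq c_\ell(A)\leq b\). A participant with \(r_i(A)=1\) satisfies \(c_i(A)\leq b\) directly from her own ceiling. Thus every charge is at most \(b\), and budget balance gives \(C\leq bm\).

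For a proper \(A\), the lower bound \(C\geq b\) is exactly the entry argument in Theorem~\ref{thm:payment-characterization}. Connectedness of \(G\) gives an adjacent outsider \(j\) with \(q_j(A)=1\), so \(b\leq c_j(A\cup\{j\})\leq C\).

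\emph{Sufficiency.} Use equal division \(c_i(A)=C/m\) at \(A\). It is trivially responsibility-nonincreasing, since all charges are equal. Each charge is at most \(b\leq br_i(A)\), so every withdrawal ceiling holds. For a proper \(A\), use the post-entry charges from the proof of Theorem~\ref{thm:payment-characterization}, which require only \(C\geq b\). Complete the rule arbitrarily at all other nonempty active sets. The monotonicity condition constrains only the charges at \(A\), so it is unaffected.

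\emph{Strict case, necessity.} Strictness at the leaf \(\ell\) gives \(c_\ell(A)<b\). The chain above then gives \(c_i(A)<b\) for every \(i\): participants with responsibility greater than one are bounded by \(c_\ell(A)\), and participants with responsibility one face their own strict ceiling \(b\). Summing gives \(C<bm\). For a proper set, the strict entry inequality gives \(C>b\).

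\emph{Strict case, sufficiency.} Equal division with \(C<bm\) gives each participant \(C/m<b\leq br_i(A)\), so every withdrawal inequality is strict. For a proper set with \(C>b\), give entrant \(j\) the charge \((C+bq_j(A))/2\), as in the remark after Theorem~\ref{thm:payment-characterization}. This exceeds \(bq_j(A)\), so entry is strictly unprofitable, and it leaves a nonnegative remainder to split among the incumbents. At \(A=V\) there are no entry conditions, so \(0\leq C<bm\) suffices.

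The only nontrivial step is the necessity bound \(C\leq bm\). It rests on two facts: a responsibility-one participant always exists, and the monotonicity restriction transfers her ceiling \(b\) to all more-responsible participants. The existence of such a participant follows from the leaf argument above.
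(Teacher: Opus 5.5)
Your proposal is correct and follows the paper's proof essentially step for step. A leaf of a spanning tree of \(G[A]\) has responsibility one, and the ordering extends its ceiling \(b\) to every more-responsible participant, so budget balance gives \(C\leq bm\); equal division plus the post-entry charges of Theorem~\ref{thm:payment-characterization} (the strictly deterrent ones when \(C>b\)) gives sufficiency, and the strict case runs through the leaf's strict ceiling and the strict entry inequality exactly as in the paper.
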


\begin{proof}
Choose a leaf \(j\) of a spanning tree of \(G[A]\). Deleting \(j\) leaves the
other participants connected, so \(r_j(A)=1\) and \(c_j(A)\leq b\). A
participant with responsibility one faces the same ceiling, while a
participant with greater responsibility cannot pay more than \(j\). Hence
every participant pays at most \(b\), and budget balance gives \(C\leq bm\).
The entry lower bound for a proper set follows from
Theorem~\ref{thm:payment-characterization}.

Under strict support, the spanning-tree leaf pays strictly less than \(b\).
Every other responsibility-one participant also faces a strict ceiling below
\(b\), and the ordering prevents a more-responsible participant from paying
more than the leaf. Hence every charge is strictly less than \(b\), so
\(C<bm\). A proper target also requires \(C>b\) for strict entry deterrence.

Conversely, equal division gives \(c_i(A)=C/m\leq b\) and therefore respects
every withdrawal ceiling and the required ordering. Theorem~
\ref{thm:payment-characterization} supplies a supporting rule. Under the strict
upper bound, equal payments are strictly below every withdrawal ceiling. For a
proper set, the strict lower bound permits the strictly deterrent post-entry
charges constructed after Theorem~\ref{thm:payment-characterization}; full
participation has no entry constraint. This proves the strict statement.
\end{proof}

\section{Partial information and local responsibility certificates}
\label{app:partial-information}

The main analysis uses exact withdrawal responsibility, which requires a
complete map of active positions and links. In a decentralized system, a
participant may instead observe only nearby active positions. Remote links
matter because they determine whether branches reconnect after her withdrawal.
This appendix derives the largest payment ceiling that can be justified from
such a bounded view. It draws on work about what can be inferred from
finite-radius network views \citep{Angluin1980,Linial1992}, but does not study
communication or distributed computation.

\subsection{Local responsibility certificates}

Let \(H=G[A]\) be connected, let \(m=|A|\) be known, and fix \(k\geq1\).
Participant \(i\)'s observed neighborhood is the rooted subgraph induced by
\[
 \mathcal B_i^{(k)}
 =\{v\in A:\operatorname{dist}_{H}(i,v)\leq k\},
\]
and its frontier is
\[
 \partial\mathcal B_i^{(k)}
 =\{v\in A:\operatorname{dist}_{H}(i,v)=k\}.
\]
The observation does not reveal links leaving the frontier.

Consider the components of
\(H[\mathcal B_i^{(k)}\setminus\{i\}]\). Call a component \emph{sealed} if it
does not meet the frontier; if the observed neighborhood contains all of
\(A\), call every component sealed. Let \(S_i^{(k)}\) be the total size of
the sealed components and \(s_i^{(k)}\) the size of the largest one, with
value zero when none exists. Define
\begin{equation}
 \hat r_i^{(k)}(A)
 =m-\max\left\{m-1-S_i^{(k)},s_i^{(k)}\right\}
 =\min\left\{1+S_i^{(k)},m-s_i^{(k)}\right\}.
 \label{eq:k-hop-responsibility}
\end{equation}

\begin{proposition}[Sharp responsibility certificate]
\label{prop:k-hop-responsibility}
Let \(A\) be connected, let \(i\in A\), and let \(k\geq1\). The quantity in
\eqref{eq:k-hop-responsibility} satisfies
\[
 1\leq\hat r_i^{(k)}(A)
 \leq\hat r_i^{(k+1)}(A)
 \leq r_i(A).
\]
It is the largest lower bound on \(r_i(A)\) valid for every connected
\(m\)-vertex network consistent with participant \(i\)'s rooted \(k\)-step
observation. It equals \(r_i(A)\) when that observation contains the whole
active network.
\end{proposition}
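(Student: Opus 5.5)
The argument has three parts: validity of the certificate, sharpness (optimality), and monotonicity in \(k\).

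\emph{Validity.} Let \(H'\) be any connected \(m\)-vertex graph consistent with \(i\)'s rooted \(k\)-step view. A sealed component contains no frontier vertex, so all its vertices lie at distance less than \(k\) from \(i\). Every \(H'\)-neighbour of such a vertex therefore lies within distance \(k\), so it is observed. Hence a sealed component of the view is an entire component of \(H'-i\), with no edges leaving it. The remaining \(m-1-S_i^{(k)}\) vertices form the other components. Thus the largest component of \(H'-i\) has order at most \(\max\{m-1-S_i^{(k)},s_i^{(k)}\}\), which gives \(\hat r_i^{(k)}\leq r_i\) for \(H'\), and in particular for \(H\).

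\emph{Lower bound of one and exactness.} The inequality \(\hat r_i^{(k)}\geq1\) follows because \(S_i^{(k)}\leq m-1\) and \(s_i^{(k)}\leq m-1\). If the view contains all of \(A\), every component is sealed, \(S_i^{(k)}=m-1\), and \(s_i^{(k)}\) equals the largest component of \(H-i\). The formula then returns \(r_i(A)\) exactly.

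\emph{Sharpness.} I will construct a consistent network attaining the bound. Take the observed rooted ball. Attach all \(m-|\mathcal B_i^{(k)}|\) unobserved vertices as a path hanging beyond one frontier vertex. Then add unobserved links, or paths of new vertices placed beyond the frontier, that join every unsealed component into one component avoiding \(i\). Concretely, I will link the frontier vertices of distinct unsealed components through the unobserved vertices, or directly by an edge between frontier vertices if no unobserved vertices remain. These added edges must not change distances inside the ball. Any edge or new path attached only at frontier vertices keeps every distance at least \(k\), so the view is unchanged.

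The main obstacle is the degenerate case in which no unobserved vertices remain and a direct frontier-to-frontier edge is needed. A frontier--frontier edge leaves distances unchanged but would appear in the induced rooted subgraph. I will therefore treat ``observation'' as the ball together with the links among ball vertices. In that case, if no spare vertices exist, the graph is fully observed, or else the unsealed components are genuinely separate. I need to check that \(\max\{m-1-S,s\}\) then still equals the true largest component, or else argue that the certificate tacitly allows such an edge. If unsealed components exist, then \(m-1-S\geq1\) and the merged component has order \(m-1-S\). If \(S=m-1\) with no unsealed components, we are in the fully observed case.

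\emph{Monotonicity in \(k\).} The \(k\)-view is determined by the \((k+1)\)-view. So every network consistent with the \((k+1)\)-view is consistent with the \(k\)-view. The largest valid lower bound over the smaller family of networks is therefore at least as large. This follows directly from the optimality characterization, with no need to compare \(S\) and \(s\) combinatorially.
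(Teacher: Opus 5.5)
Your proposal is correct and follows the paper's proof. Sealed components stay separate after \(i\) withdraws; your remark that their vertices lie within distance \(k-1\), so all their neighbours are observed, supplies a justification the paper leaves implicit. A completion merging all unsealed components with the unobserved vertices attains the bound, and monotonicity follows from nesting of the consistent families.

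The degenerate case you flag never arises. By the sealing convention, unsealed components exist only when some vertex is unobserved, so one unobserved vertex can always serve as the hub, exactly as in the paper. You should therefore drop the frontier-to-frontier fallback and the ``new vertices'' alternative: the first would alter the observed subgraph and the second would change the order \(m\). Also, \(\hat r_i^{(k)}\geq1\) uses \(S_i^{(k)}\geq0\), not \(S_i^{(k)}\leq m-1\).
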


\begin{proof}
If \(\mathcal B_i^{(k)}=A\), then every component is sealed,
\(S_i^{(k)}=m-1\), and
\[
 s_i^{(k)}=\LCC_G(A\setminus\{i\}).
\]
Equation~\eqref{eq:k-hop-responsibility} therefore gives
\(\hat r_i^{(k)}(A)=r_i(A)\).

Suppose that some vertices remain unobserved. After \(i\) withdraws, every
sealed component remains separate. All other participants number
\(m-1-S_i^{(k)}\), so no component of \(G[A\setminus\{i\}]\) has more than
\[
 \max\{m-1-S_i^{(k)},s_i^{(k)}\}
\]
vertices. This proves the lower bound.

The bound is sharp. Choose one unobserved vertex, join it to a frontier vertex
in every unsealed component, and attach all other unobserved vertices to the
resulting component. The added hub remains at distance \(k+1\), so the
completion preserves the observed induced subgraph. It creates one component
of size \(m-1-S_i^{(k)}\), while the sealed components retain their observed
sizes. Finally, every network consistent with the \((k+1)\)-step observation
is also consistent with the \(k\)-step observation. Taking minima over these
nested collections proves monotonicity.
\end{proof}

Sharpness is relative to the stated information: group size and the rooted
induced subgraph. A known host graph or other verified global information may
raise the certificate.

For example, consider the eight-player path
\[
 u\!-\!v\!-\!i\!-\!a\!-\!b\!-\!c\!-\!d\!-\!e.
\]
With \(k=3\), participant \(i\) sees the segment from \(u\) through \(c\).
After her hypothetical withdrawal, \(\{u,v\}\) is sealed, whereas
\(\{a,b,c\}\) reaches the frontier. Thus
\[
 S_i^{(3)}=s_i^{(3)}=2,
 \qquad
 \hat r_i^{(3)}(A)=3.
\]
The observation certifies the loss of \(i,u,\) and \(v\) without revealing
how the other branch continues.

\subsection{Cost-bearing capacity from local views}

Define the \emph{radius-\(k\) cost-bearing capacity} by
\[
 C^{(k)}(A)=b\sum_{i\in A}\hat r_i^{(k)}(A).
\]
This is the largest operating cost that can be covered when each charge must
be justified separately from that participant's view. Necessity follows by
summing the certified payment ceilings. If \(C\leq C^{(k)}(A)\), the charges
\[
 c_i(A)=
 \frac{C}{C^{(k)}(A)}\,b\hat r_i^{(k)}(A)
\]
cover the cost without exceeding any certified ceiling. The institution must
collect the individual certificates to divide the bill, but it need not pool
the underlying network maps. Pooling those maps could certify a larger bill.

Proposition~\ref{prop:k-hop-responsibility} implies
\[
 bm\leq C^{(k)}(A)
 \leq C^{(k+1)}(A)
 \leq bR(A).
\]
At full participation, \(C^{(k)}(A)\) is the largest Nash-supportable cost
under the certification requirement. For a proper connected group, support
also requires \(C\geq b\) and the post-entry charges used in the proof of
Theorem~\ref{thm:payment-characterization}.

For a path with \(m\geq2k+2\), the vertices at distances
\(1,\ldots,k-1\) from either endpoint have certificates
\(2,\ldots,k\); every other certificate equals one. Hence
\[
 \sum_{i\in A}\hat r_i^{(k)}(A)
 =m+2\sum_{\ell=1}^{k-1}\ell
 =m+k(k-1),
\]
and therefore
\[
 C^{(k)}(A)=b\bigl[m+k(k-1)\bigr],
 \qquad
 bR(A)=bF_m.
\]
It follows that \(C^{(k)}(A)/[bR(A)]\) tends to zero whenever \(k/m\)
tends to zero. Certifying a share of a path's full-information capacity that
stays bounded away from zero therefore requires an observation radius
proportional to group size.
The same argument applies to any verified lower bounds on withdrawal
responsibility.

\section{Additional implications and extensions}
\label{app:extensions}

\subsection{Coordinated withdrawal}

The individual payment ceilings also control coalitions whose members all
withdraw.

\begin{corollary}[Coalitional withdrawal by participants]
\label{cor:coalition-withdrawal}
Let \(A\ne\varnothing\), and let \(c\) be a cost-division rule under which \(A\)
is a Nash equilibrium. No nonempty coalition of participants can increase its
combined payoff by having all its members become inactive, even when they can
make budget-balanced transfers among themselves. If \(A\) is a strict
equilibrium under \(c\), every such joint withdrawal strictly reduces the
coalition's combined payoff.
\end{corollary}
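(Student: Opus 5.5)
Let \(S\subseteq A\) be a nonempty coalition of participants and let \(A'=A\setminus S\). The plan is to compare the coalition's combined payoff before and after, using only the withdrawal ceilings \eqref{eq:withdrawal-ceilings} from Lemma~\ref{lem:participation-conditions}. Transfers among members cancel in the combined payoff, so it suffices to compare combined payoffs. Before withdrawal, the coalition receives
\[
 |S|\,b\LCC_G(A)-\sum_{i\in S}c_i(A).
\]
After joint withdrawal, its members are outsiders. They pay nothing and jointly receive \(|S|\,b\LCC_G(A')\). The gain from withdrawing is therefore
\[
 \sum_{i\in S}c_i(A)-|S|\,b\bigl[\LCC_G(A)-\LCC_G(A')\bigr].
\]

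The key step is a lower bound on the output destroyed by joint withdrawal. I would show that
\[
 \LCC_G(A)-\LCC_G(A\setminus S)\geq\max_{i\in S}r_i(A).
\]
This follows from monotonicity of \(\LCC_G\) under inclusion. Since \(A\setminus S\subseteq A\setminus\{i\}\) for each \(i\in S\), every component of \(G[A\setminus S]\) lies inside a component of \(G[A\setminus\{i\}]\). Hence \(\LCC_G(A\setminus S)\leq\LCC_G(A\setminus\{i\})\), and the destroyed output is at least \(r_i(A)\) for every \(i\in S\).

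Combining this bound with the ceilings \(c_i(A)\leq br_i(A)\) gives
\[
 \sum_{i\in S}c_i(A)\leq b\sum_{i\in S}r_i(A)\leq |S|\,b\max_{i\in S}r_i(A)\leq |S|\,b\bigl[\LCC_G(A)-\LCC_G(A')\bigr].
\]
So the gain is nonpositive, and no joint withdrawal can increase the coalition's combined payoff.

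For the strict case, Lemma~\ref{lem:participation-conditions} makes every ceiling strict, \(c_i(A)<br_i(A)\). Because \(S\) is nonempty, the first inequality in the chain becomes strict, and the gain is strictly negative. The possibility that some \(r_i(A)=0\) needs a separate check. Strictness together with \(c_i(A)\geq0\) already forces every \(r_i(A)>0\), so the strict inequality holds as written.

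The main obstacle is the monotonicity step, together with making sure the argument does not require \(A\) to be connected. Since monotonicity of the largest component under vertex deletion holds for arbitrary induced subgraphs, I expect the argument to go through for general nonempty \(A\).
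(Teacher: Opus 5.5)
Your proof is correct and matches the paper's argument: monotonicity of \(\LCC_G\) under inclusion gives \(\LCC_G(A)-\LCC_G(A\setminus S)\geq r_i(A)\) for each \(i\in S\). Combined with the withdrawal ceilings, this bounds the avoided charges by the coalition's lost public benefit, and strictness follows from the strict ceilings. Routing the bound through \(\max_{i\in S}r_i(A)\) and the remark that \(r_i(A)>0\) are harmless but unnecessary, since each strict inequality \(c_i(A)<br_i(A)\) already makes the sum strict.
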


\begin{proof}
For nonempty \(T\subseteq A\), let
\[
 \Delta_T(A)=\LCC_G(A)-\LCC_G(A\setminus T).
\]
Since \(A\setminus T\subseteq A\setminus\{i\}\), monotonicity of
largest-component size gives \(\Delta_T(A)\geq r_i(A)\) for every \(i\in T\).
The payment ceilings therefore imply
\[
 \sum_{i\in T}c_i(A)
 \leq b\sum_{i\in T}r_i(A)
 \leq |T|b\Delta_T(A).
\]
The coalition avoids the charges on the left, but each departing member loses
\(b\Delta_T(A)\) in public benefit. Its combined gain is nonpositive. At a
strict Nash equilibrium, the first inequality is strict.
\end{proof}

This result holds other players' actions and the cost-division rule fixed. It is
not a strong-Nash result and does not cover deviations that mix entry and
withdrawal.

\subsection{Nonlinear and heterogeneous public benefits}

The Nash-supportability argument does not require linear or identical benefits.
Suppose player \(i\) receives a nondecreasing benefit
\(B_i(\LCC_G(S))\) at active set \(S\). For a nonempty target \(A\), define
\[
 v_i(A)=B_i\bigl(\LCC_G(A)\bigr)
 -B_i\bigl(\LCC_G(A\setminus\{i\})\bigr),
 \qquad i\in A,
\]
and, for \(j\notin A\), define her entry value by
\[
 e_j(A)=B_j\bigl(\LCC_G(A\cup\{j\})\bigr)
 -B_j\bigl(\LCC_G(A)\bigr).
\]
Let \(e_{\max}(A)=\max_{j\notin A}e_j(A)\) for a proper target, and set
\(e_{\max}(V)=0\). The same two deviation calculations as in
Lemma~\ref{lem:participation-conditions} show that \(A\) is a Nash equilibrium
under rule \(c\) exactly when
\[
 c_i(A)\leq v_i(A)\quad(i\in A),
 \qquad
 c_j(A\cup\{j\})\geq e_j(A)\quad(j\notin A).
\]
Consequently, \(A\) is Nash supportable exactly when
\[
 e_{\max}(A)\leq C\leq\sum_{i\in A}v_i(A).
\]
Thus heterogeneity changes both boundaries: current participants determine the
upper bound, while prospective entrants determine the lower bound.

In this generalized model, the participant ceilings also control coordinated
exit. If a coalition \(T\) withdraws, monotonicity and
\(A\setminus T\subseteq A\setminus\{i\}\) imply that every departing member
loses at least \(v_i(A)\). The individual inequalities therefore continue to
prevent a coordinated exit from increasing the coalition's combined payoff.
If \(A\) is Nash supportable and every \(v_i(A)>0\), then, among the charges
at \(A\) prescribed by supporting rules, the unique division that minimizes
the largest ratio of a charge to the corresponding withdrawal value is
\[
 c_i(A)=C\frac{v_i(A)}{\sum_{\ell\in A}v_\ell(A)},
 \qquad i\in A.
\]

\subsection{Immutable personal costs}

Suppose instead that player \(i\) bears an immutable cost \(\kappa_i>0\)
whenever active, with no shared cost to reallocate. An active set \(A\) is a
pure Nash equilibrium exactly when
\[
 \kappa_i\leq br_i(A)\quad(i\in A),
 \qquad
 \kappa_j\geq bq_j(A)\quad(j\notin A).
\]
This benchmark is an exact potential game \citep{MondererShapley1996}, with
potential
\[
 b\LCC_G(A)-\sum_{i\in A}\kappa_i.
\]
Indeed, every unilateral change alters the player's payoff and this function
by the same amount.
It differs from the fixed operating cost in the main model. The capacity result
there relies on the institution's ability to divide \(C\) among participants.

\end{document}